\newtheorem{corollary}{Corollary}
\newtheorem{lemma}{Lemma}
\begin{document}

\title{Next-Generation Multiple Access Based on NOMA with Power Level Modulation}
\author{\IEEEauthorblockN{Xinyue~Pei, Yingyang~Chen,
		Miaowen~Wen, 
		Hua~Yu,\\
		Erdal~Panayirci,~\IEEEmembership{Life Fellow,~IEEE,}~and~H.~Vincent~Poor,~\IEEEmembership{Life Fellow,~IEEE}}\\
%	Yik-Chung~Wu,~\IEEEmembership{Senior Member,~IEEE}
		\thanks{X. Pei, M. Wen, and H. Yu are with the National Engineering Technology Research Center for Mobile Ultrasonic Detection, South China University of Technology, Guangzhou 510640, China (e-mail:
		eexypei@mail.scut.edu.cn;
		\{eemwwen, yuhua\}@scut.edu.cn).}
%	\thanks{X. Wang is with the Department of Computing, The Hong Kong
%		Polytechnic University, Hong Kong (e-mail: xiaojie.wang@polyu.edu.hk).}
	\thanks{Y. Chen is with Department of Electronic Engineering, 
		College of Information Science and Technology, Jinan University, Guangzhou (e-mail:
		chenyy@jnu.edu.cn).}
	\thanks{E. Panayirci is with the Department of Electrical and Electronics
		Engineering, Kadir Has University, 34083 Istanbul, Turkey
		(e-mail: eepanay@khas.edu.tr).}
	\thanks{H. V. Poor is with the Department of Electrical and Computer Engineering, Princeton
		University, Princeton, NJ 08544 USA (e-mail: poor@princeton.edu).}
%	\thanks{Y.-C. Wu is with the Department of Electrical and Electronic Engineering, The University of Hong Kong, Hong Kong (e-mail: ycwu@eee.hku.hk).}
}
\markboth{SUBMITTED TO 	IEEE JOURNAL ON SELECTED AREAS IN COMMUNICATIONS}{}
\maketitle

\begin{abstract}
To cope with the explosive traffic growth of next-generation wireless communications, it is necessary to design next-generation multiple access  techniques that can provide  higher spectral efficiency as well as larger-scale connectivity. As a promising candidate, power-domain non-orthogonal multiple access (NOMA) has been widely studied. In conventional power-domain NOMA, multiple users are multiplexed in the same time and frequency band by different preset power levels, which, however, may limit the spectral efficiency under practical finite alphabet inputs. Inspired by the concept of spatial modulation, we propose to solve this problem by encoding extra information bits into the power levels, and exploit different signal constellations to help the receiver distinguish between them. To convey this idea, termed power selection (PS)-NOMA, clearly, we consider a simple downlink two-user NOMA system with finite input constellations. Assuming maximum-likelihood detection, we derive closed-form approximate bit error ratio (BER) expressions for both users. The achievable rates of both users are also derived in closed form. Simulation results verify the analysis and show that the proposed PS-NOMA outperforms conventional NOMA in terms of BER and achievable rate.
\end{abstract}

% Note that keywords are not normally used for peerreview papers.
\begin{IEEEkeywords}
	Bit error rate, next-generation multiple access (NGMA), achievable rate, finite alphabet input, non-orthogonal multiple access (NOMA).
\end{IEEEkeywords}
\IEEEpeerreviewmaketitle
\section{Introduction}
%引出NOMA
Future wireless networks, such as beyond fifth-generation (B5G) or sixth-generation (6G) networks, are expected to support extremely high data rates and numerous users or nodes with various applications and services [\ref{futurenetwork}]. However, the conventional orthogonal access (OMA) schemes  used in the previous wireless generations cannot meet these unprecedented demands, limiting the improvement of the overall spectral efficiency (SE). Against the background, non-orthogonal multiple access (NOMA) was proposed, which allows numerous users to share the same resource (e.g., a time/frequency resource block) and separate the users in  power and code domains at the expense of  additional receiver complexity [\ref{NOMAconcept}], [\ref{YWLIU1}]. In the power-domain NOMA, users are typically multiplexed with different  power levels by using superposition coding at the transmitter and are distinguished through successive interference cancellation (SIC) at the receiver [\ref{SIC}], [\ref{HVPOOR}]. Compared with OMA, NOMA has higher significant system throughput  and greater fairness [\ref{NOMAad1}]-[\ref{NOMAad3}]. Due to these advantages,  NOMA has been recognized as a critical technology by the third-generation partnership project (3GPP) for future wireless networks [\ref{3GPP}].

% NOMA具体paper
It is worth noting that the majority of existing 
NOMA schemes assume Gaussian input signals [\ref{Gaussian1}]-[\ref{Gaussian3}]. Although Gaussian inputs can theoretically attain the channel capacity, the corresponding
implementation faces many difficulties, e.g., very large storage capacity, high  computational complexity, and extremely long
decoding delay [\ref{disad_gauss}]. Applying the results
derived from the Gaussian inputs to the signals with finite alphabet
inputs, e.g., pulse amplitude
modulation (PAM), can result in a significant performance loss [\ref{loss_gauss}]. Motivated by this situation, some researchers have considered finite alphabet inputs [\ref{two_user_GMAC}]-[\ref{uplink_NOMA_finite}]. In particular, the authors of [\ref{two_user_GMAC}] studied NOMA-aided two-user Gaussian multiple
access channels (MACs) with finite complex input constellations, and the constellation-constrained capacity region of the proposed scheme was derived. Moreover, they designed a constellation rotation (CR) scheme for $M$-ary phase shift keying (PSK) and $M$-PAM signals. In [\ref{two_GMAC_PA}], the same authors proposed a novel power allocation scheme for the  model in [\ref{two_user_GMAC}], which can achieve similar performance as the CR scheme while reducing the decoding complexity for quadrature amplitude modulation (QAM) constellations. The authors of [\ref{up_noma_finite}] developed a novel 
framework for a classical two-transmitter two-receiver NOMA system over Z-channels with QAM and
max-min user fairness based on the aforementioned two studies.  Specifically, they formulated a max-min
optimization problem to maximize the smaller minimum Euclidean
distance among the two resulting signal constellations at both
receivers. Similarly, in  [\ref{uplink_NOMA_finite}], the authors considered a classical two-user MAC with NOMA and practical QAM constellations, aiming at 
maximizing the minimum Euclidean distance of the received sum
constellation with a maximum likelihood (ML) detector by controlling transmitted powers
and phases of users.

%SM概念和具体paper
On the other hand, spatial modulation (SM) has
been regarded as another promising multi-antenna technique of improving
SE for next-generation wireless networks [\ref{SM}]. Unlike conventional multiple-input multiple-output (MIMO), SM selects only one activated antenna for each transmission, thus avoiding inter-antenna
interference and the requirement of multiple radio frequency (RF) chains. Hence, SM systems can  achieve reduced
implementation cost and complexity. Precisely, in SM systems, the transmitted information consists of the index of an
active antenna and a  modulated symbol. Clearly, combining NOMA with SM (termed SM-NOMA) will further improve system SE without increasing the power consumption and implementation
complexity. Therefore, SM-NOMA has been exploited extensively in recent years, and capacity or BER performance analysis is the focus of many works [\ref{rate_sm_noma_opt}]-[\ref{SM_coNOMA}]. In [\ref{rate_sm_noma_opt}], an iterative algorithm for the
spatial-domain design was proposed to
maximize the instantaneous capacity of SM-NOMA.  The authors of [\ref{NOMA_SM_V2V_MIMO}] applied SM-NOMA in wireless vehicle-to-vehicle (V2V) environments. Moreover, they analyzed  closed-form capacity expressions, and  BER performance via Monte Carlo simulations and formulated a pair of power allocation
optimization schemes for the system.
 In [\ref{SM_coNOMA}], the authors studied the BER and capacity performance of a novel three-node cooperative relaying system using SM-aided NOMA. Notably, all the above mentioned works focused on Gaussian input, which do not apply to  finite input constellations.  Few researchers have considered finite alphabet inputs in SM-NOMA systems [\ref{finite_SM_NOMA_letter}]-[\ref{fund_SM_finite_NOMA}]. The authors of [\ref{finite_SM_NOMA_letter}], [\ref{finite_SM_NOMA}] proposed and studied the respective SM-NOMA systems from their mutual information (MI) perspective. Since the MI lacks a closed-form formulation, they proposed a lower bound to quantify it.
  In [\ref{SM_NOMA_finite_ICC}], an SM aided cooperative NOMA scheme with bit allocation  was studied, and its SE was analyzed.
  The authors of [\ref{fund_SM_finite_NOMA}] mainly investigated the fundamental applicability of SM for multi-antenna channels and found intrinsic cooperation for constructing
 energy-efficient finite-alphabet NOMA.

%引出idea
However, to the best of our knowledge, all works about SM-NOMA  simply considered transmitting NOMA signals 
in SM systems. In other words, they did not change NOMA itself. Conversely, how can we use an SM-like idea to design a novel NOMA scheme and improve its performance? Furthermore, will we be able to design the transmit power to achieve effects similar to the spatial gain in SM-NOMA? At the time of writing, no work has solved this problem.  Against the background,  in this paper,
we propose a novel downlink NOMA scheme using power selection (PS) with finite-alphabet inputs, termed  PS-NOMA, over Rayleigh fading channels, which consists of one BS and two users. At the transmitter, except for the PAM symbols intended for the users, we also design the transmit power into the codebook, thus increasing the number of bits transmitted.  The main contributions
of this paper are summarized as follows:
\begin{itemize}% contribution
	\item We design a novel PS-NOMA scheme for the classical
	two-user downlink channel, where the transmitter randomly chooses transmit power from the preset power matrix. Nevertheless, PS makes both transmission and decoding different from previous works. To this end, we design specific constellations and the ML detector for the proposed scheme. We further discuss the influence caused by different configurations of the power matrix and give the optimal order of the matrix. Moreover, we calculate the best design for the power matrix based on the minimum Euclidean distance of the constellation. 
	\item The performance of PS-NOMA is theoretically analyzed
	in terms of BER and achievable rate. We derive closed-form achievable rate expressions for PS-NOMA. Since it is difficult to derive exact BER expressions, approximate closed-form BER
	expressions are derived for users employing $M$-ary PAM instead, which well match with the simulation counterparts. It is worth noting that all derived results are restricted
	to specific constellations and Rayleigh fading channels.
	\item Monte Carlo simulations are performed to 
	verify the theoretical analysis. In consideration of  fairness, conventional  NOMA without PS (termed NOMA) is used as a benchmark.  Simulation
	results  show
	that PS-NOMA outperforms NOMA
	in terms of BER and achievable rate.
\end{itemize}

The rest of this paper is organized as follows. Section~II describes the system model of PS-NOMA.  Section \uppercase\expandafter{\romannumeral3} provides the achievable rate and BER analyses.  Section \uppercase\expandafter{\romannumeral4} analyzes the numerical results. Finally, we conclude the work in Section~\uppercase\expandafter{\romannumeral5}. 

\emph{Notation}: The probability of
an event and the probability density function (PDF) of a random variable are denoted
by $\Pr(\cdot)$ and $p(\cdot)$, respectively. $Q(\cdot)$,
$\Gamma(\cdot)$, $\mathbb{E}\{\cdot\}$, and $\text{Var}\{\cdot\}$ denote the Gaussian Q-function, gamma function,
expectation, and variance, respectively. $|\cdot|$ denotes the absolute value of
a complex scalar. $\lfloor \cdot \rfloor$ denotes the floor function. $\mathcal{R}\{\cdot\}$ indicates the real part of a complex value. Superscript $*$ stands for complex conjugates. ${\rm{I}}(\cdot)$ denotes MI, and ${\rm H}(\cdot|\cdot)$ denotes entropy. Finally, $x\sim\mathcal N_c(\mu_x,\beta_x)$ indicates that the random
variable $x$ obeys a complex Gaussian distribution with mean
$\mu_x$ and variance $\beta_x$. 
\section{System Model}
\begin{figure*}[t]
	\centering
	\includegraphics[width=7in]{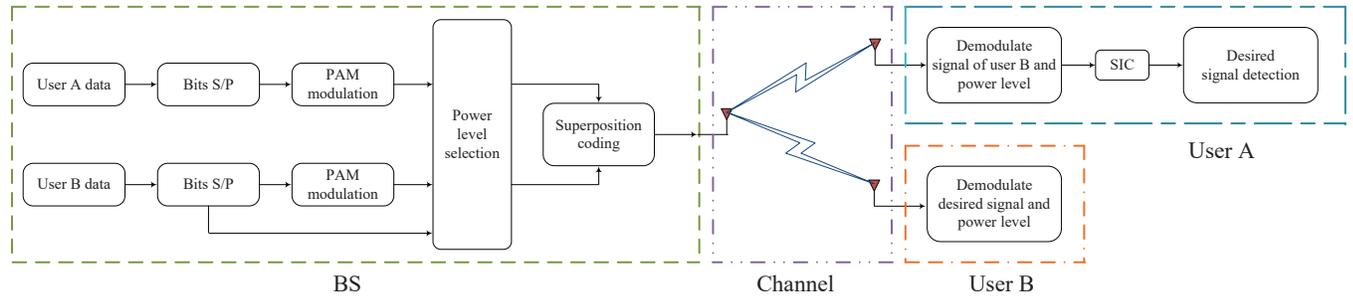}
	\caption{System model with 2 users and $N=2$.} 	
	\label{system_model}
\end{figure*}
%\begin{figure}[t]
%	\centering
%	\includegraphics[width=3.5in]{Cons_superimposed_PAM.eps}
%	\caption{Constellation of superposition coded symbols with PL $l$, where $s_A$ and
%		$s_B$ are both 2-PAM symbols.} 	
%	\label{constell_pam_l_trans}
%\end{figure}
\begin{figure}[t]
	\centering
		\includegraphics[width=5in]{./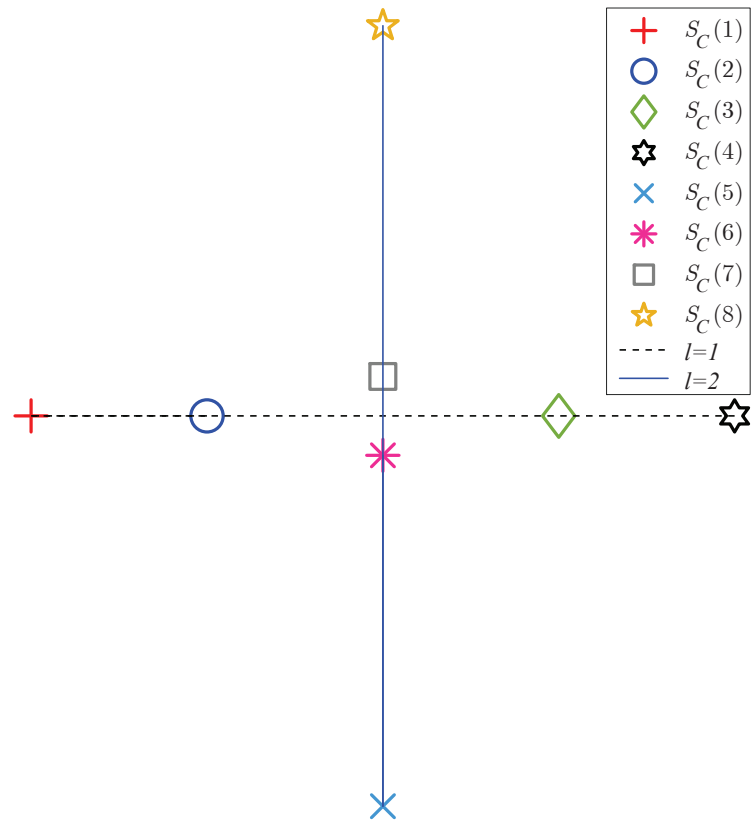}
	\caption{Constellation $\mathcal{S}_C(i)$ ($i\in\{1,\cdots,8\}$) of superimposed signal transmitted with 2-PAM signals $s_A$ as well as  $s_B$ and $N=2$.}
	\label{constell_superimposed_trans}
\end{figure} 

\subsection{Transmission Model}
Before introducing PS-NOMA, let us recall the conventional downlink two-user NOMA system first. In such a system, the base station (denoted by BS) transmits a  superimposed signal to the users. Based on the NOMA protocol, this signal consists of the data symbols desired by both users, and the power allocated to the symbols. In conventional NOMA systems, information is carried out only by data symbols. However, the power allocation coefficients contained in the superimposed signal can also be utilized to carry information by appropriate design.
  
Based on this idea, we consider a downlink PS-NOMA network as illustrated in Fig. \ref{system_model}, which consists of a BS, a  near user (denoted by $U_A$), a far user (denoted
by $U_B$). All the nodes are equipped with a single antenna. The channels of BS$\rightarrow U_A$ and BS$\rightarrow U_B$
are respectively denoted by $h_A$ and $h_B$, which are assumed to follow $\mathcal N_c(0,\beta_A)$ and $\mathcal N_c(0,\beta_B)$ distributions. Notably, in our system, all the nodes are
assumed to be synchronous in both time and
frequency. Moreover, complete and perfect channel state information (CSI) is assumed to be
available to users.  Unlike conventional NOMA, we obtain a spectral efficiency  gain by selecting a power level (PL) at the transmitter. Similar to the SM, a
PL is chosen from a set of PLs $N$ at random where $N=2^k$, $k=1,2,\ldots$, so that each PL
carries $\log_2(N)$ bits of information in addition to the amount of information carried by the transmitted data symbols.  Noticing that $U_A$ is the near user and $U_B$ is the far
user, we have $\beta_A\geq\beta_B$, and the power allocated to $U_A$ should be smaller than that assigned to $U_B$ according to the principle of NOMA [\ref{SIC}]. In this manner, the transmit power allocation coefficient matrix can be written as
\begin{align}      %开始数学环境
	\textbf{P}=\left[                 %左括号
	\begin{array}{cc}   %该矩阵一共3列，每一列都居中放置
		\sqrt{p_A(1)}, & \sqrt{1-p_A(1)}\\  %第一行元素
		\sqrt{p_A(2)}, & \sqrt{1-p_A(2)}\\  %第二行元素
		\cdots &\cdots\\
		\sqrt{p_A(N)}, & \sqrt{1-p_A(N)}\\
	\end{array}
	\right],                 %右括号
\end{align}
where the first  ($0<p_A(l)<0.5$, $l\in\{1,\ldots,N\}$) and the second columns represent the power allocation coefficient for $U_A$ and  $U_B$, respectively, and different rows represent different PLs. To ensure the PLs are distinguishable,   we deliberately rotate each PL with a certain angle, which is given by $\bm{\Theta}=\text{diag}(\exp(0),\exp(j\pi/N),\cdots,\exp(j(N-1)\pi/N))$. Therefore, the rotated power allocation coefficient matrix can be derived as
\begin{align}
	\textbf{G}=\bm{\Theta}\textbf{P}=\left[                 %左括号
	\begin{array}{cc}   %该矩阵一共3列，每一列都居中放置
		\alpha_A(1), & \alpha_B(1)\\  %第一行元素
		\alpha_A(2), & \alpha_B(2)\\
		\cdots&\cdots\\
		\alpha_A(N), & \alpha_B(N)\\
	\end{array}
	\right],
\end{align}
where $\alpha_A(l)=\exp(j(l-1)\pi/N)p_A(l)$ and $\alpha_B(l)=\exp(j(l-1)\pi/N)(1-p_A(l))$.

During each transmission, the BS first chooses the PL $l\in \mathcal{N}$ $(\mathcal{N}={1,2,\cdots,N})$ as the transmit power. Subsequently, it conveys the superposition
coded symbol 
\begin{align}
	s_C=\alpha_A(l)s_A+\alpha_B(l)s_B
\end{align}
to $U_A$ and $U_B$ simultaneously, where $s_A\in \mathcal{S}_A$ and $s_B\in \mathcal{S}_B$ are the data symbols intended for
$U_A$ and $U_B$, respectively. We assume that $s_A$ and $s_B$ are $M_A$-ary PAM and $M_B$-ary PAM symbols with $\mathbb{E}\{|s_A|^2\}=\mathbb{E}\{|s_B|^2\}=1$;  
\begin{align}
\mathcal{S}_A=\{\pm d_A, \pm 3d_A,\cdots, \pm (2M_A-1)d_A \}
\end{align}
and
\begin{align}
\mathcal{S}_B=\{\pm d_B, \pm 3d_B,\cdots, \pm (2M_B-1)d_B \}
\end{align} are the corresponding $M_A$ and $M_B$ points constellations, respectively, where $d_A=\sqrt{3/(M_A^2-1)}$ and $d_B=\sqrt{3/(M_B^2-1)}$ respectively represent  half of the minimum distance between two adjacent points of the  normalized  $M_A$-PAM and $M_B$-PAM constellations.  Let $\mathcal{M}_i=\{1,2,\cdots,M_i\}$ ($i\in\{A,B\}$), the signal constellation of
$s_C$ can be expressed as 
\begin{align}
\mathcal{S}_C=\{s_C|s_i=\mathcal{S}_i(k_i), {\rm{PL}}=l, k_i\in \mathcal{M}_i, l\in \mathcal{N}, i\in\{A,B\}\},
\end{align}
which is an irregular $M_AM_BN$-ary constellation, as shown in Fig. \ref{constell_superimposed_trans}.  Consequently, the received signals at $U_A$ and $U_B$ can be given by
\begin{align}
y_A=h_As_C+n_A 
\end{align}
and 
\begin{align}
	y_B=h_Bs_C+n_B,
\end{align}
respectively, where $n_A\sim\mathcal{N}_c(0,N_0)$ ($n_B\sim\mathcal{N}_c(0,N_0)$) indicates the  additive white Gaussian noise (AWGN) at $U_A$ ($U_B$) with power spectral density $N_0$.
\begin{figure}[t]
	\centering
	\includegraphics[width=5in]{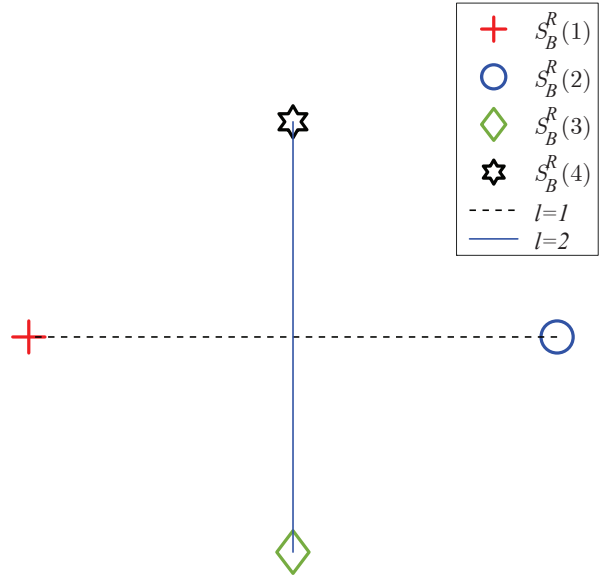}
	\caption{Constellations of $s_B$ with $N=2$, where  $s_B$ is a 2-PAM symbol,  and $\mathcal{S}_B^R$ denotes the union constellation of $s_B$ and PL. $\mathcal{S}_B^R(1),\mathcal{S}_B^R(3) : s_B=-1$, $\mathcal{S}_B^R(2), \mathcal{S}_B^R(4): s_B=1$.}
	\label{constellation_symB}
\end{figure}  
\begin{figure}[t]
	\centering
	\includegraphics[width=5in]{./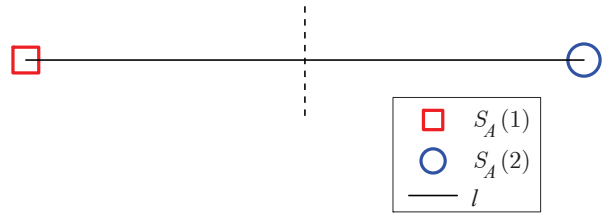}
	\caption{Constellation of 2-PAM signal $s_A$.}
	\label{constell_symA}
\end{figure} 
\begin{algorithm}[t]
	\caption{Detection Algorithm} 
	\label{Detection_algorithm} 
	\begin{algorithmic}[1]
		\For {$U_A$}   	
		\State Compare the received signal $y_A$ with the
		joint constellation  $\mathcal{S}_B^R$:
		\begin{align}\label{ml_UA_sB}
			\mathcal{S}_B^{R}(\hat{i})=\mathop {\arg\min\limits_{i}}|y_A-h_A\mathcal{S}_B^R(i)|^2, 
		\end{align}
		where  $i\in\{1,\cdots,NM_B\}$, and $\hat{i}$ is the estimate of $i$ at $U_A$.
		\State Determine active PL $\hat{l}$ and the transmitted signal $\hat{s}_B$ by resolving $\mathcal{S}_B^{R}(\hat{i})$, where $\hat{s}_B$ and $\hat{l}$ are the estimates of $s_B$ and $l$ at $U_A$. 
		\State Use SIC to decode $s_A$,	whose ML detector can be expressed as
		\begin{align}\label{ML_UA_SA}
			\hat{s}_A=\mathop {\arg\min\limits_{s_A}}|y_A-h_A\alpha_B(\hat{l})\hat{s}_B-h_A\alpha_A(\hat{l})s_A|^2,
		\end{align}
		where $\hat{s}_A$ is the estimate  of $s_A$ at $U_A$.
		\EndFor
		\For {$U_B$}
		\State Compare the received signal $y_B$ with the
		joint constellation $\mathcal{S}_B^R$:
		\begin{align}
			\mathcal{S}_B^{R}(\hat{i})=\mathop {\arg\min\limits_{i}}|y_B-h_A\mathcal{S}_B^R(i)|^2, 
		\end{align}
		where  $i\in\{1,\cdots,NM_B\}$, and $\hat{i}$ is the estimate of $i$ at $U_B$.
		\State Determine active PL $\hat{l}$ and the transmitted signal $\hat{s}_B$ by resolving $\mathcal{S}_B^{R}(\hat{i})$, where $\hat{s}_B$ and $\hat{l}$ are the estimates of $s_B$ and $l$ at $U_B$. 
		\EndFor  
	\end{algorithmic} 
\end{algorithm}
\subsection{Signal Decoding}
The above subsection describes the  transmission model of PS-NOMA. Now we shift our focus to the decoding process.
Before the following description, we need to first clarify which data symbol gets the information of the PL. \emph{In this paper, we use the information of PL to enhance the performance of $U_B$. In other scenarios, it can also be used to improve the performance of $U_A$}.

It is assumed that both users invoke ML detectors to decode their signals. As we know, PL is an important factor affecting the decoding procedure because it is closely related to SIC and the decoding of data symbols. In conventional NOMA scheme, whose PLs are fixed, the users already know the values of PLs when decoding. While in PS-NOMA, PLs change over time and are unknown at the users.  Hence, the conventional ML detector cannot be applied directly in PS-NOMA, and we should also decode PL at both users. Under the NOMA protocol, $U_A$ first decodes $s_B$ and $l$; for $U_B$, it detects $s_B$ and $l$ by treating $s_A$ as noise. We herein use the joint constellation of $s_B$ and PL to help decoding. For example, in the case
of $N=2$ and 2-PAM $s_B$ as shown in Fig. \ref{constellation_symB}, a combined 4-ary constellation is formed with
four points, i.e.,
\begin{align}
\mathcal{S}_B^R(i)=\{s_B=\pm 1, l=1,\,2\}, i\in\{1,\cdots,4\}. 
\end{align} 
Explicitly, the information of $s_B$ and PL is contained in each point of the constellation. If the received signal after equalization is close  to $\mathcal{S}_B^R(1)$, we can readily know the estimated $s_B=-1$ and $l=1$. With these information, $U_A$ then subtracts the estimated $s_B$ and decodes its own signal, and this procedure can be seen as a common 2-PAM signal decoding as shown in Fig. \ref{constell_symA}. The detailed detection algorithm for both users are summarized in  \textbf{Algorithm \ref{Detection_algorithm}}. 

\section{Performance Analysis}
In this section, we theoretically analyze the performance of
the proposed PS-NOMA scheme in terms of 
achievable rate and BER. Moreover, we discuss the optimal design of $N$ and $\textbf{G}$.

\subsection{Achievable Rate Analysis}

For convenience, we let $\{a,b,l\}$ ($a\in \mathcal{M}_A$, $b\in \mathcal{M}_B$, $l\in \mathcal{N}$) denote  $(\alpha_A(l)\mathcal{S}_A(a)+\alpha_B(l)\mathcal{S}_B(b))$. 
It is considered that all PLs are activated equally likely for PS-NOMA.
\subsubsection{Achievable Rate Analysis of $U_B$} It follows from the chain rule that
\begin{align}\label{ISBYB}	
{\rm{I}}(s_B,{\rm{PL}};y_B)={\rm{I}}(s_B;y_B|{\rm{PL}})\it+{\rm{I}}({\rm{PL}}\it;y_B).
\end{align}
 Therefore, we deal with the calculations of ${\rm{I}}(s_B;y_B|{\rm{PL}})$ and ${\rm{I}}({\rm{PL}};y_B)$ in the following. Specifically, we have 
\begin{align}\label{ISBYB_PL}
{\rm{I}}(s_B;y_B|{\rm{PL}})= {\rm{H}}(\it y_B|{\rm{PL}})-{\rm{H}}(\it y_B|s_B,{\rm{PL}}),
\end{align}
where
\begin{align}
{\rm{H}}(y_B|{\rm{PL}})=\frac{1}{N}\sum_{l=1}^{N}{\rm{H}}(y_B|{\rm{PL}}=l)
\end{align}
and
\begin{align}	
{\rm{H}}(y_B|s_B,{\rm{PL}})=\frac{1}{NM_B}\sum_{k_B=1}^{M_B}\sum_{l=1}^{N}{\rm{H}}(y_B|s_B=\mathcal{S}_B(k_B),{\rm{PL}}=l).
\end{align}
Here, ${\rm{H}}(y_B|{\rm{PL}}=l)$ and ${\rm{H}}(y_B|s_B=\mathcal{S}_B(k_B),{\rm{PL}}=l)$ can be respectively given by	
\begin{align}\label{HyBPl}
	&{\rm{H}}(y_B|{\rm{PL}}=l)=\int p(y_B|{\rm{PL}}=l)\log_2(p(y_B|{\rm{PL}}=l))dy_B\nonumber\\
	&=-\frac{\sum_{k_A=1}^{M_A}\!\sum_{k_B=1}^{M_B}\!\sum_{l=1}^{N}\!\log_2\left({\sum_{i_A=1}^{M_A}\!\sum_{i_B=1}^{M_B}\!p(y_B|s_A=\mathcal{S}_A(i_A),s_B=\mathcal{S}_B(i_B),{\rm{PL}}=l)}/{M_AM_B} \right)}{M_AM_BN}
\end{align}	
and
\begin{align}\label{HyBSBPl}
	&H(y_B|s_B\!=\!\mathcal{S}_B(k_B),{\rm{PL}}\!=\!l)\!\!=\!\!\int\!\! p(y_B|s_B\!=\!\mathcal{S}_B(k_B),\!{\rm{PL}}\!=\!l)\log_2(p(y_B|s_B=\mathcal{S}_B(k_B),{\rm{PL}}=l))dy_B\nonumber\\
	&=-\frac{\sum_{k_A=1}^{M_A}\sum_{k_B=1}^{M_B}\sum_{l=1}^{N}\log_2\left({\sum_{i_A=1}^{M_A}p(y_B|s_A=\mathcal{S}_A(i_A),s_B=\mathcal{S}_B(k_B),{\rm{PL}}=l)}/{M_A}\right)}{M_AM_BN}.
\end{align}	 Substituting (\ref{HyBPl}), (\ref{HyBSBPl}), and 
\begin{align}\label{py_iaibl}
	p(y_B|s_A=\mathcal{S}_A(i_A),s_B=\mathcal{S}_B(i_B),{\rm{PL}}=l)=\frac{1}{\pi N_0}\exp\left(-\frac{|y_B-h_B\{i_A,i_B,l\}|^2}{N_0}\right)
\end{align}
 into (\ref{ISBYB_PL}), ${\rm{I}}(s_B;y_B|{\rm{PL}})$ can be given by
\begin{align}\label{eq_of_IsBYBPL}
	&{\rm{I}}(s_B;y_B|{\rm{PL}})=\log_2(M_B)-\frac{1}{M_AM_BN}\nonumber\\
	&\times\sum_{k_A=1}^{M_A}\sum_{k_B=1}^{M_B}\sum_{l=1}^{N}\mathbb{E}\left\{\log_2\left(\frac{\sum_{i_A=1}^{M_A}\sum_{i_B=1}^{M_B}\exp(-|h_B\{k_A,k_B,l\}-h_B\{i_A,i_B,l\}+n_B|^2/N_0)}{\sum_{i_A=1}^{M_A}\exp(-|h_B\{k_A,k_B,l\}-h_B\{i_A,k_B,l\}+n_B|^2/N_0)}\right)\right\}
	.
\end{align}
 In this manner, ${\rm{I}}({\rm{PL}};y_B)$ can also be derived as 	\begin{align}\label{IPL_YB}
 	&{\rm{I}}({\rm{PL}};y_B)={\rm{H}}(y_B)-{\rm{H}}(y_B|{\rm{PL}})\nonumber\\
 	&=\log_2(N)+\frac{1}{M_AM_BN}\nonumber\\
 	&\times\sum_{k_A=1}^{M_A}\!\sum_{k_B=1}^{M_B}\!\sum_{l=1}^{N}\mathbb{E}\left\{\log_2\left(\frac{\sum_{i_A=1}^{M_A}\!\sum_{i_B=1}^{M_B}\!\exp(\!-\!|h_B\{k_A,k_B,l\}\!\!-\!\!h_B\{i_A,i_B,l\}\!\!+\!\!n_B|^2/N_0)}{\sum_{i_A=1}^{M_A}\!\sum_{i_B=1}^{M_B}\!\sum_{ll=1}^{N}\!\exp(\!-\!|h_B\{k_A,k_B,l\}\!\!-\!\!h_B\{i_A,i_B,ll\}\!\!+\!\!n_B|^2/N_0)}\right)\right\}.
 \end{align}
\subsubsection{Achievable Rate Analysis of $U_A$}
Different from $U_B$, $U_A$ detects
its own signal after removing the interference imposed by $U_B$. By invoking a similar mathematical  method, the achievable rate for
$U_A$ detecting $s_B$ is given by
\begin{align}\label{ISBYA}
	{\rm{I}}(s_B,{\rm{PL}};y_A)={\rm{I}}(s_B;y_A|{\rm{PL}})+{\rm{I}}({\rm{PL}};y_A),
\end{align}
where ${\rm{I}}(s_B;y_A|{\rm{PL}})$ and ${\rm{I}}({\rm{PL}};y_A)$ can be respectively derived as
\begin{align}\label{eq_of_IsBYAPL}
	&{\rm{I}}(s_B;y_A|{\rm{PL}})=\log_2(M_B)-\frac{1}{M_AM_BN}\nonumber\\
	&\times\sum_{k_A=1}^{M_A}\sum_{k_B=1}^{M_B}\sum_{l=1}^{N}\mathbb{E}\left\{\log_2\left(\frac{\sum_{i_A=1}^{M_A}\sum_{i_B=1}^{M_B}\exp(-|h_A\{k_A,k_B,l\}-h_A\{i_A,i_B,l\}+n_A|^2/N_0)}{\sum_{i_A=1}^{M_A}\exp(-|h_A\{k_A,k_B,l\}-h_A\{i_A,k_B,l\}+n_A|^2/N_0)}\right)\right\}
\end{align}
and 	
\begin{align}\label{IPL_YA}
	&{\rm{I}}({\rm{PL}};y_A)=\log_2(N)+\frac{1}{M_AM_BN}\nonumber\\
	&\times\!\!\sum_{k_A=1}^{M_A}\!\sum_{k_B=1}^{M_B}\!\sum_{l=1}^{N}\!\mathbb{E}\left\{\log_2\left(\frac{\sum_{i_A=1}^{M_A}\!\sum_{i_B=1}^{M_B}\!\exp(\!-\!|h_A\{k_A,k_B,l\}\!-\!h_A\{i_A,i_B,l\}\!+\!n_A|^2\!/\!N_0)}{\sum_{i_A=1}^{M_A}\!\sum_{i_B=1}^{M_B}\!\sum_{ll=1}^{N}\!\exp(-|h_A\{k_A,k_B,l\}\!-\!h_A\{i_A,i_B,ll\}\!+\!n_A|^2\!/\!N_0)}\right)\right\}.
\end{align} With the detected PL, the achievable rate for
$U_A$ detecting $s_A$, i.e., ${\rm{I}}(s_A;y_A|{\rm{PL}})$ can be derived as 	\begin{align}\label{ISA_YA_PL}
	&{\rm{I}}(s_A;y_A|{\rm{PL}})=\log_2(M_A)-\frac{1}{M_AN}\nonumber\\
	&\times{\sum_{k_A=1}^{M_A}\sum_{l=1}^{N}}\mathbb{E}\left\{\log_2\left(\frac{\sum_{i_A=1}^{M_A}\exp\left(|h_A\alpha_A(l)(\mathcal{S}_A(k_A)-\mathcal{S}_A(i_A))+n_A|^2/N_0\right)}{\exp\left(-|n_A|^2/N_0\right)}\right)\right\}.
\end{align}

\subsection{BER Analysis}
In this subsection, closed-form BER expressions are derived for $U_A$ and $U_B$ in PS-NOMA systems.

\subsubsection{BER of $U_B$}
To obtain the BER of $U_B$, we should  derive the symbol error ratio (SER) of  $s_B$ at $U_B$  first. Similar to [\ref{PEP_NOMA}], [\ref{pepmethod}], we utilize the conditional pairwise error probability (PEP), which indicates the probability of detecting $s_B$ as $\hat{s}_B$ conditioned
on $h_B$, namely 
\begin{align}\label{PEP_SB_HB}
	&\Pr(s_B\to\hat{s}_B|h_B)\nonumber\\
	&=\Pr(|y_B-h_B\alpha_B(l)s_B|^2>|y_B-h_B\alpha_B(\hat{l})\hat{s}_B|^2)\nonumber\\
	&\overset{(a)}{=}\Pr(|h_Bx_B|^2-|h_B\hat{x}_B|^2-2\mathcal{R}\{y_B^*h_B(x_B-\hat{x}_B)\}>0)\nonumber\\
	&\overset{(b)}{=}\Pr(-|h_B\triangle x_B|^2-2\mathcal{R}\{w_B^*h_B\triangle x_B\}>0)\nonumber\\
	&\overset{(c)}{=}Q\left(\sqrt{\frac{|h_B|^2(|\triangle x_B|^2+2\mathcal{R}\{x_A^*\triangle x_B\})^2}{2N_0|\triangle x_B|^2}}\right)\nonumber\\
	&\overset{(d)}{=}Q\left(\sqrt{\frac{|h_B|^2}{2N_0^{\rm {equal}}}}\right),
\end{align}
where we define in $(a)$  ${x}_B\triangleq\alpha_B({l}){s}_B$ and $\hat{x}_B\triangleq\alpha_B(\hat{l})\hat{s}_B$; in $(b)$  $w_B=h_B\alpha_A(l)s_A+n_B=h_Bx_A+n_B$, with $x_A\triangleq\alpha_A({l}){s}_A$, and $\triangle x_B=x_B-\hat{x}_B$; the detailed derivation of $(c)$  is present in Appendix \ref{Proof_of_c}; $(d)$ is obtained by defining equivalent noise variance as
\begin{align}\label{equival_noise}
	N_0^{\rm {equal}}\triangleq\frac{N_0|\triangle x_B|^2}{(|\triangle x_B|^2+2\mathcal{R}\{x_A^*\triangle x_B\})^2}.
\end{align} Since we model $h_B$ as a complex Gaussian random variable, $|h_B|^2$ has a PDF
\begin{align}
	p_{|h_B|^2}(x)=\frac{1}{\beta_B}\exp\left(-\frac{x}{\beta_B}\right).
\end{align}
With $p_{|h_B|^2}(x)$, we can arrive  at [\ref{Psbtohatsb}, Eq. (64)]
\begin{align}\label{PEP_of_xB_uB}
		\setcounter{equation}{28}
	\Pr(s_B\to\hat{s}_B)&=\int_0^{+\infty}\Pr(s_B\to\hat{s}_B|h_B)p_{|h_B|^2}(x)dx\nonumber\\
	&= \int_0^{+\infty}Q\left(\sqrt{\frac{|h_B|^2}{2N_0^{\rm {equal}}}}\right)p_{|h_B|^2}(x)dx\nonumber\\
	&=\frac{1}{2}\left(1-\sqrt{\frac{\beta_B}{4N_0^{\rm {\rm {equal}}}+\beta_B}}\right),
\end{align}
which can be averaged over all the possible values of $x_A^*$ to consider all interference scenarios. Notably, $x_A$ herein should keep the same PL as $x_B$.
Let $P_{ij,B}^{\rm {bit}}$ denote the number of error bits when symbol $\mathcal{S}_B^R(i)$ is incorrectly detected as $\mathcal{S}_B^R(j)$. Since $s_B$ and PL have $\log_2(NM_B)$ bits, the BER of $s_B$ and PL detections at $U_B$ can be approximated by
\begin{align}\label{PB}
	P_B&\approx\frac{1}{NM_B\log_2(NM_B)}\sum_{i=1}^{NM_B}\sum_{j \neq i}\Pr(\mathcal{S}_B^R(i)\to\mathcal{S}_B^R(j))P_{ij,B}^{\rm {bit}}.
\end{align}

%\begin{table}
%	\begin{center}  
%		\caption{SIMULATION PARAMETERS}  
%		\label{simulation_parameters}
%		\begin{tabular}{c||c}    
%			\hline 
%			Parameter & Value \\  
%			\hline  
%			\hline  
%			Number of servers in ES  $c$ & 4 \\
%			\hline    
%			Maximum tolerable SOP 	$\zeta$ & 1e-1 \\  
%			\hline 
%			Carrier frequency 	$f_c$&5.9$\times 10^9$\\
%			\hline 
%			Minimum target rate 	$R_{min}$& 2.2bit/s/Hz\\
%			\hline 
%			subchannel bandwidth& 180KHz\\
%			\hline 
%			Tolerance 	$e$&1e-4\\
%			\hline 
%			AWGN noise $N_0$&-174 dBm/Hz\\
%			\hline  
%		\end{tabular}  
%	\end{center}  
%\end{table}
\subsubsection{BER of $U_A$}
Recall the ML detection in (\ref{ml_UA_sB}) and (\ref{ML_UA_SA}). $U_A$ first decodes PL as well as $s_B$ and extracts $s_B$ from the received signal
$y_A$ and then decodes $s_A$ from the residual signal. Obviously,  
the overall BER can be divided into two complementary
parts, depending on whether $s_B$ and PL $l\in\mathcal{N}$ are correctly detected or not. Since we have $N$ PLs, the overall BER of $U_A$ can
be evaluated as
\begin{align}\label{PA}
	P_A\approx(1-P_A^{s_B})\left(\frac{1}{N}\sum_{l=1}^{N}P_{\rm {bit}}^A(l)\right)+\frac{P_A^{s_B}}{2},
\end{align}
where $P_{A}^{s_B}$
is the SER of $s_B$ at $U_A$, and $P_{\rm {bit}}^A(l)$ denotes the BER of $M_A$-ary PAM  demodulation
over Rayleigh fading channels with PL $l$ selected.  Since it is difficult
to exactly calculate the BER when $s_B$ or PL is detected
incorrectly at $U_A$, we use the estimate ${P_A^{s_B}}/{2}$ instead. Similarly, to derive $P_{A}^{s_B}$, we use the conditional PEP, which denotes the probability of detecting $s_B$ as $\hat{s}_B$ conditioned
on $h_A$ [\ref{PEP_NOMA}], [\ref{pepmethod}] 
\begin{align}
	&\Pr(s_B\to\hat{s}_B|h_A)\nonumber\\
	&=\Pr(|y_A-h_A\alpha_B(l)s_B|^2>|y_A-h_A\alpha_B(\hat{l})\hat{s}_B|^2)\nonumber\\
	&=Q\left(\sqrt{\frac{|h_A|^2(|\triangle x_B|^2+2\mathcal{R}\{x_A^*\triangle x_B\})^2}{2N_0|\triangle x_B|^2}}\right)\nonumber\\
	&=Q\left(\sqrt{\frac{|h_A|^2}{2N_0^{\rm {\rm {equal}}}}}\right).
\end{align}
Definitions of ${x}_B$, $\hat{x}_B$, $x_A$, $\triangle x_B$, and $N_0^{\rm {equal}}$ are given in (\ref{PEP_SB_HB}) and (\ref{equival_noise}). Having
\begin{align}
	p_{|h_A|^2}(x)=\frac{1}{\beta_A}\exp\left(-\frac{x}{\beta_A}\right),
\end{align}
we can arrive at [\ref{Psbtohatsb}, Eq. (64)]
\begin{align}\label{PEP_of_xB_ua}
	\Pr(s_B\to\hat{s}_B)&=\int_0^{+\infty}\Pr(s_B\to\hat{s}_B|h_A)p_{|h_A|^2}(x)dx\nonumber\\
	&=\frac{1}{2}\left(1-\sqrt{\frac{\beta_A}{4N_0+\beta_A}}\right),
\end{align}
which can be averaged over all the possible values of $x_A^*$ to consider all interference scenarios. Equally, the $x_A$ herein should keep the same PL as $x_B$. Hence, $P_A^{s_B}$ is given by
\begin{align}\label{PA_SB}
	P_A^{s_B}\approx \frac{1}{NM_B}\sum_{i=1}^{NM_B}\sum_{j\neq i}\Pr(\mathcal{S}_B^R(i)\to\mathcal{S}_B^R(j)).
\end{align}
As for the calculation of $P_{\rm {bit}}^A(l)$, we first derive the conditional error probability for PL $l$ under the $m_A$-th bit ($m_A\in\{1,\ldots,\log_2(M_A)\}$)  and the conditions of receive signal-to-noise ratio (SNR) at $U_A$, i.e.,
\begin{align}
	\gamma_A=\frac{\alpha_A(l)^2|h_A|^2}{N_0},
\end{align}
as [\ref{PAB_paper}] 
\begin{align}\label{PAbit_m}
	&P_{\rm {bit}}^A(l|m_A,\gamma_A)=\frac{2}{M_A}\nonumber\\
	&\times\sum_{i=0}^{(1-2^{-m_A})M_A-1}\left\{(-1)^{\lfloor {i(2^{m_A}-1)}/{M_A}\rfloor}\left(2^{m_A-1}-\lfloor\frac{i(2^{m_A-1})}{M}+\frac{1}{2}\rfloor\right)Q\left((2i+1)d_A\sqrt{2\gamma_A}\right)\right\}.
\end{align}
 By substituting 
the alternative representation, i.e.,
\begin{align}
	Q(x)=\frac{1}{\pi}\int_{0}^{\frac{\pi}{2}}\exp\left(-\frac{x^2}{2\sin^2\psi}\right)d\psi
\end{align}
for the Gaussian Q-function in (\ref{PAbit_m}), we can obtain
\begin{align}
P_{\rm {bit}}^A(l|m_A,\gamma_A)=&\frac{2}{M_A}\sum_{i=0}^{(1-2^{-m_A})M_A-1}\{(-1)^{\lfloor \frac{i(2^{m_A}-1)}{M_A}\rfloor}\nonumber\\
	&\times(2^{m_A-1}-\lfloor\frac{i(2^{m_A-1})}{M}+\frac{1}{2}\rfloor)\nonumber\\
	&	\times \underbrace{\frac{1}{\pi}\int_{0}^{\pi/2}\exp\left(-\frac{((2i+1)d_A)^2\gamma_A}{\sin^2\psi }\right)d\psi}_{\mathcal{Q}(i|\gamma_A)}\}.
\end{align}
By statistically
averaging the equation above over the PDF of $\gamma_A$, which is given by
\begin{align}
	p_{\gamma_A}(\gamma_A)=\frac{N_0}{(\alpha_A(l))^2\beta_A}\exp\left(\frac{-N_0\gamma_A}{(\alpha_A(l))^2\beta_A}\right),
\end{align}  $\mathcal{Q}(i)$ can be derived as
\begin{align}
	\mathcal{Q}(i)&\!=\!\frac{1}{\pi}\int_{0}^{\frac{\pi}{2}}\!\!\int_{0}^{\infty}\!\!\exp\!\!\left(-\frac{((2i+1)d_A)^2\gamma_A}{\sin^2\psi }\right)\!\! p_{\gamma_A}(\gamma_A)d\gamma_Ad\psi=\frac{1-\mathcal{G}(i)}{2},
\end{align}
where
\begin{align}
	\mathcal{G}(i)=\sqrt{\frac{(\alpha_A(l))^2\beta_A((2i+1)d_A)^2}{N_0+(\alpha_A(l))^2\beta_A((2i+1)d_A)^2}}.
\end{align}
In this manner, we have
\begin{align}
	P_{\rm {bit}}^A(l|m_A)=&\frac{2}{M_A}\sum_{i=0}^{(1-2^{-m_A})M_A-1}\left\{(-1)^{\lfloor \frac{i(2^{m_A}-1)}{M_A}\rfloor}\left(2^{m_A-1}-\lfloor\frac{i(2^{m_A-1})}{M}+\frac{1}{2}\rfloor\right)\mathcal{Q}(i)\right\}.
\end{align}
Therefore, we have
\begin{align}\label{PbitA}
	P_{\rm {bit}}^A(l)=\frac{1}{\log_2(M_A)}\sum_{m_A=1}^{\log_2(M_A)}P_{\rm {bit}}^A(l|m_A).
\end{align}
Finally, by substituting  (\ref{PA_SB}) and (\ref{PbitA}) into (\ref{PA}), we can derive the expression of $P_A$. 
\subsection{Optimal Constellation Analysis}
\begin{figure}[t]
	\centering
	\includegraphics[width=5in]{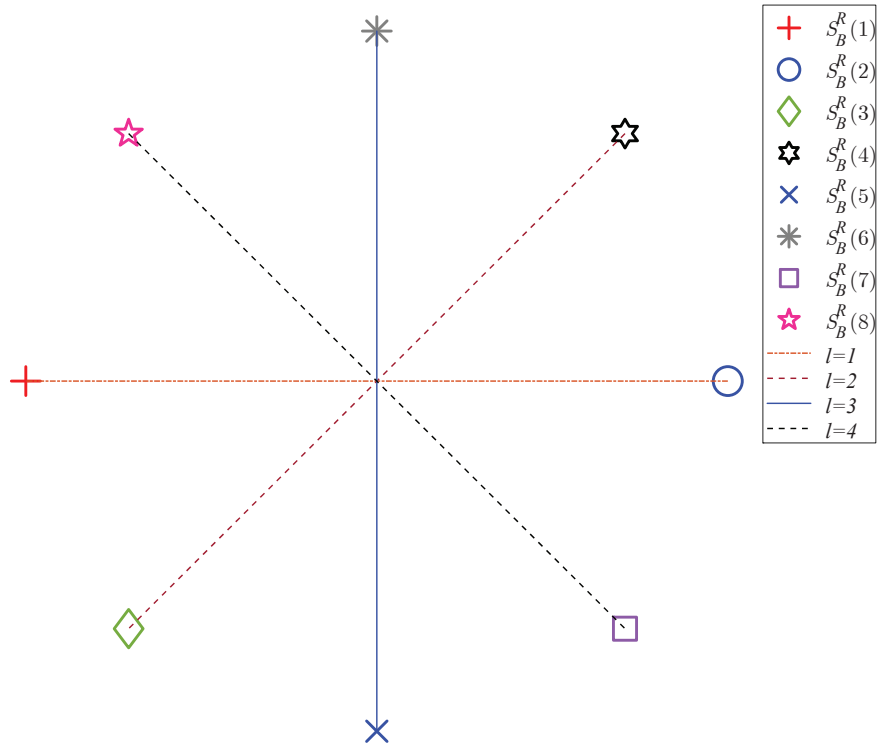}
	\caption{Constellations of $s_B$ with $N=4$, where  $s_B$ is a 2-PAM symbol,  and $\mathcal{S}_B^R$ denotes the union constellation of $s_B$ and PL.}
	\label{constellation_symB_4}
\end{figure} 
In this subsection, we discuss the optimal design of constellation in terms of the values of $N$ as well as $\textbf{G}$. For convenience, let $d_{B,ij}$ denote the Euclidean distance between points $\mathcal{S}_B^R(i)$ and $\mathcal{S}_B^R(j)$, where $i$, $j\in\{1,\ldots,NM_B\}$. According to the geometric analysis, we can obtain
the following.
\begin{lemma}
	Optimal BER performance is achieved at $N = 2$.
\end{lemma}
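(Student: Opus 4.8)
The plan is to reduce the BER comparison across different numbers of power levels to a geometric comparison of the minimum Euclidean distance of the augmented constellation $\mathcal{S}_B^R$, and then to show that this minimum distance is maximized at $N=2$. First I would establish that, for both users, the BER is a monotonically decreasing function of $d_{\min}=\min_{i\neq j}d_{B,ij}$. This follows because the conditional PEP in (\ref{PEP_SB_HB}) and its fading-averaged form (\ref{PEP_of_xB_uB}) increase with the equivalent noise variance $N_0^{\rm{equal}}$, which in turn decreases as the effective pairwise distance grows; consequently the nearest-neighbour pairs dominate the union-bound BER expressions (\ref{PB}) and (\ref{PA}) at moderate-to-high SNR. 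Minimizing BER is therefore equivalent to maximizing $d_{\min}$, turning the problem into one of optimal constellation packing.

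Next I would describe the geometry induced by $\textbf{G}$. Since $s_B$ is real-valued PAM and $\alpha_B(l)=\exp(j(l-1)\pi/N)(1-p_A(l))$, the $M_B$ points associated with the $l$-th power level all lie on a ray from the origin at angle $(l-1)\pi/N$ with radial scaling $|\alpha_B(l)|=1-p_A(l)$; hence $\mathcal{S}_B^R$ is supported on $N$ rays of uniform angular spacing $\pi/N$, as seen in Figs. \ref{constellation_symB} and \ref{constellation_symB_4}. The candidate nearest-neighbour pairs are of two types: intra-ray pairs of consecutive PAM points on a common ray, at distance $2|\alpha_B(l)|d_B$, and inter-ray pairs on adjacent rays, whose squared distance obeys the law of cosines $r_1^2+r_2^2-2r_1r_2\cos(\pi/N)$ for the relevant radii $r_1,r_2$.

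I would then study $d_{\min}(N)$ after optimizing the radial parameters $\{p_A(l)\}$ for each admissible $N=2^k$. The crucial observation is that the inter-ray distance is strictly increasing in the angular spacing $\pi/N$: for fixed radii, $\cos(\pi/N)$ grows with $N$, so adjacent rays crowd together and their separation shrinks. At $N=2$ the two rays are orthogonal ($\cos(\pi/2)=0$), yielding the largest inter-ray separation $\sqrt{r_1^2+r_2^2}$, whereas every $N\geq 4$ forces $\cos(\pi/N)\geq 1/\sqrt{2}>0$ and hence a strictly smaller inter-ray distance for comparable radii. After balancing the intra- and inter-ray distances through the optimal power allocation, I would show that the attainable $\max d_{\min}$ is a decreasing function of $N$, so the smallest admissible value $N=2$ delivers the best BER, completing the proof.

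The main obstacle will be the last step: because the radii $\{1-p_A(l)\}$ are free parameters subject only to $0<p_A(l)<0.5$, the angular crowding that penalizes larger $N$ can be partially offset by spreading the rays radially, so one must prove that the angular effect always dominates. The clean way to close this gap is to upper-bound, via the law-of-cosines expression, the best inter-ray distance achievable for any $N\geq 4$ under the normalization constraint, and to exhibit an explicit orthogonal-ray configuration at $N=2$ whose $d_{\min}$ strictly exceeds that bound. Confirming that no admissible radial assignment for $N\geq 4$ can match the orthogonal $N=2$ packing is the delicate part of the argument.
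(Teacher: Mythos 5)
Your proposal follows essentially the same route as the paper's own proof: reduce the BER comparison to the minimum Euclidean distance of the augmented constellation $\mathcal{S}_B^R$, note that the points sit on $N$ rays separated by $\pi/N$, apply the law of cosines to the inter-ray pairs, and conclude from the monotonicity of $\cos(\pi/N)$ in $N$ that the smallest admissible $N$ wins. The paper does exactly this for the representative case of 2-PAM inputs and $N=4$, writing out $d_{B,1,\min}=\min\{d_{B,12},d_{B,13},d_{B,16},d_{B,18}\}$ and then invoking the fact that $\cos(\pi/N)$ increases with $N$. Where you differ is in rigor on both ends. First, you make explicit the step the paper leaves implicit, namely that minimizing BER is equivalent to maximizing $d_{\min}$ via the nearest-neighbour dominance of the PEP-based union bounds in (\ref{PB}) and (\ref{PA}); that is a worthwhile addition. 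Second, and more importantly, you correctly identify that the radii $|\alpha_B(l)|=1-p_A(l)$ are free design parameters, so a larger $N$ could in principle compensate for angular crowding by spreading the rays radially; a complete proof must show the angular penalty cannot be offset. The paper's proof silently sidesteps this by comparing constellations with the radii held fixed, so it does not close that gap either. You flag it as the ``delicate part'' but do not close it; until you carry out the promised upper bound on the best achievable inter-ray distance for $N\geq 4$ under the power normalization (noting that consistency with the $U_A$ constellation and the constraint $0<p_A(l)<0.5$ restricts the radii to a bounded interval, which is what makes such a bound feasible), your argument is a proof plan rather than a proof --- albeit one that is more honest about the logical structure than the published version.
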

\begin{proof}
	 Since the cases of $N\geq2$ all have similar derivation processes, we can obtain pattern by analyzing a certain case. For ease of discussion, we consider the  2-PAM signal  $s_A$, 2-PAM signal $s_B$, and $N=4$ scenario as an example. As indicated in \textbf{Algorithm \ref{Detection_algorithm}}, $N$ only affects the order of combined constellation.
	When $N=4$, the combined constellation is shown as Fig. \ref{constellation_symB_4}. Let $d_{B,i,\min}$ denotes the minimum Euclidean distance between $\mathcal{S}_B^R(i)$ and other points. Considering the symmetry of Fig. \ref{constellation_symB_4}, we first take $\mathcal{S}_B^R(1)$ for instance.   From Fig. \ref{constellation_symB_4}, we can obtain $d_{B,1,\min}$ as
	\begin{align}
		d_{B,1,\min}=\min\{d_{B,12},d_{B,13},d_{B,16},d_{B,18}\},
	\end{align}
	where
	\begin{align}
		d_{B,12}=2d_B|\alpha_B(1)|,
	\end{align}
	\begin{align}
		d_{B,13}=\sqrt{|d_B\alpha_B(1)|^2+|d_B\alpha_B(2)|^2-2d_B^2|\alpha_B(1)\alpha_B(2)|\cos\left(\frac{\pi}{N}\right)},
	\end{align}
	\begin{align}
	d_{B,16}=\sqrt{|d_B\alpha_B(1)|^2+|d_B\alpha_B(3)|^2-2d_B^2|\alpha_B(1)\alpha_B(3)|\cos\left(\frac{2\pi}{N}\right)},
   \end{align}
	and
	\begin{align}\label{d14_d23}
		d_{B,18}=\sqrt{|d_B\alpha_B(1)|^2+|d_B\alpha_B(4)|^2-2d_B^2|\alpha_B(1)\alpha_B(4)|\cos\left(\frac{\pi}{N}\right)},
	\end{align}
	respectively. In the similar manner, we can derive $d_{B,2,\min}$, and so on.
	Therefore, the minimum Euclidean distance of Fig. \ref{constellation_symB_4}, can be derived as 
	\begin{align}\label{dBmin}
		d_{B,\min}=\min\{d_{B,1,\min},d_{B,2,\min},d_{B,3,\min},d_{B,4,\min},d_{B,5,\min},d_{B,6,\min},d_{B,7,\min},d_{B,8,\min}\}.
	\end{align}	
	Since we have $N\geq 2$, the angle ${\pi}/{N}$ and ${2\pi}/{N}$ are values between 0 and $\pi$. In other words, $\cos(\pi/N)$ is monotonously increasing with
	respect to $N$. 
	Evidently, to achieve optimal BER performance, $N$ should be equal to the minimum value, which proves \textbf{Lemma 1}.
\end{proof}
\begin{corollary}
Optimal BER performance is achievable when the information of PL is only carried out by the difference of angle. 
\end{corollary}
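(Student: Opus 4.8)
The plan is to reduce the corollary to a statement about the minimum Euclidean distance $d_{B,\min}$ of the joint constellation $\mathcal{S}_B^R$. Because the PEPs in (\ref{PEP_of_xB_uB}) and (\ref{PEP_of_xB_ua}) — and hence the BERs $P_B$ in (\ref{PB}) and $P_A$ in (\ref{PA}) — are dominated at high SNR by the nearest-neighbor pairs, minimizing the BER is equivalent to maximizing $d_{B,\min}$ under a fixed average power split between the users. By Lemma 1 the optimal order is $N=2$, so I work with two PLs whose rotation angles differ by $\pi/N=\pi/2$; the only remaining freedom is how the $U_B$ energy is apportioned between the individual magnitudes $|\alpha_B(1)|$ and $|\alpha_B(2)|$, and the goal is to show that the symmetric (pure-angle) choice is optimal.

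First I would sort the pairwise distances in $\mathcal{S}_B^R$ into intra-PL pairs (same PL, opposite $s_B$) and inter-PL pairs, exactly as in the proof of Lemma 1. The intra-PL distances are $2d_B|\alpha_B(l)|$, whereas a generic inter-PL distance takes the form of (\ref{d14_d23}) with $\cos(\pi/N)$ as its cross term. The key observation is that at $N=2$ one has $\cos(\pi/2)=0$, so the cross term vanishes and every inter-PL distance collapses to $d_B\sqrt{|\alpha_B(1)|^2+|\alpha_B(2)|^2}$: the two PLs become orthogonal, and their separation depends solely on the sum $E\triangleq|\alpha_B(1)|^2+|\alpha_B(2)|^2$. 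This sum is precisely (twice) the mean energy allotted to the $U_B$ symbol, which is held fixed once the average power split between the two users is prescribed. Consequently the inter-PL distance is invariant to how that energy is divided across the two PLs.

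The optimization is then immediate. Factoring out the common $d_B$, we have $d_{B,\min}=d_B\min\{2|\alpha_B(1)|,\,2|\alpha_B(2)|,\,\sqrt{E}\}\le d_B\sqrt{E}$, where the bound $d_B\sqrt{E}$ is pinned by the power constraint. Equality holds exactly when both intra-PL distances obey $2|\alpha_B(l)|\ge\sqrt{E}$, which is satisfied whenever the two magnitudes are balanced; in particular the symmetric choice $|\alpha_B(1)|=|\alpha_B(2)|$, i.e. $p_A(1)=p_A(2)$, lies strictly inside this range and attains $d_{B,\min}=d_B\sqrt{E}$. Since $p_A(1)=p_A(2)$ makes the two PLs identical in magnitude and distinguishable \emph{only} through the rotation $\bm{\Theta}$, this configuration encodes the PL purely in the angle yet still reaches the largest possible $d_{B,\min}$, hence the best BER. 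Conversely, deliberately introducing a magnitude difference can never push $d_{B,\min}$ above $d_B\sqrt{E}$ and strictly reduces it once one magnitude becomes small enough that an intra-PL pair turns into the nearest neighbor, which proves the corollary.

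The main obstacle I anticipate is twofold. First, the comparison is only meaningful once the power constraint is fixed properly: the whole argument rests on holding the mean $U_B$ energy (equivalently the average power delivered to $U_A$) constant so that $E$ does not vary with the split; without this normalization the claim is vacuous. Second, the clean orthogonality that renders the inter-PL distance independent of the apportionment is special to $N=2$ and the $\pi/2$ rotation; carrying the nearest-neighbor bookkeeping over to $M_B$-PAM with $M_B>2$ requires verifying that the extra intra-PL points never create a pair closer than $d_B\sqrt{E}$, which I would dispatch by the same enumeration used for $\{d_{B,1j}\}$ in Lemma 1. The remaining soft spot is rigorously tying the high-SNR BER ordering to $d_{B,\min}$ alone rather than to the full PEP sums in (\ref{PB}) and (\ref{PA_SB}); this follows from the standard dominance of the minimum-distance term at high SNR.
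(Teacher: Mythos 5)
Your route is essentially the paper's: both arguments reduce the corollary to maximizing the minimum Euclidean distance of the joint constellation $\mathcal{S}_B^R$ at $N=2$, split the pairwise distances into intra-PL terms $2d_B|\alpha_B(l)|$ and inter-PL terms $d_B\sqrt{|\alpha_B(1)|^2+|\alpha_B(2)|^2}$ (the cross term vanishing because $\cos(\pi/2)=0$), and conclude that balancing $|\alpha_B(1)|=|\alpha_B(2)|$ is optimal, so that the PL is conveyed by the rotation $\bm{\Theta}$ alone. You are in fact tighter than the paper on two points: you fix the average power split so that $E=|\alpha_B(1)|^2+|\alpha_B(2)|^2$ is invariant and the comparison is well posed (the paper's illustrative comparison of $|\alpha_A(1)|=0.2$, $|\alpha_A(2)|=0.1$ against $|\alpha_A(1)|=|\alpha_A(2)|=0.2$ silently changes the total power), and you give the exact equality condition $2|\alpha_B(l)|\ge\sqrt{E}$ rather than merely asserting that the symmetric point is best.

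The one piece you drop is the $U_A$ side. The paper's proof deliberately begins with the constellation of $s_A$ and the distance $d_{A,\min}=2d_A|\alpha_A(l)|$ in (\ref{dAmin}), because $P_A$ in (\ref{PA}) is not governed by $d_{B,\min}$ alone: the term $\frac{1}{N}\sum_{l}P_{\rm {bit}}^A(l)$ depends on $\min_l|\alpha_A(l)|$ through (\ref{PbitA}). Your claim that minimizing the BER is equivalent to maximizing $d_{B,\min}$ is therefore valid for $P_B$ and for the $P_A^{s_B}$ portion of $P_A$, but not for the residual $s_A$ detection. The gap is easily closed: since each row of $\textbf{P}$ is unit-norm, fixing the average of $p_A(l)$ also fixes $\sum_l|\alpha_A(l)|^2$, and $\min_l|\alpha_A(l)|$ subject to a fixed sum of squares is maximized at $|\alpha_A(1)|=|\alpha_A(2)|$, i.e., at the same symmetric configuration. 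Adding that one observation makes your argument cover both users and fully match the paper's conclusion.
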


\begin{proof}
According to \textbf{Lemma 1}, we consider 2-PAM signal  $s_A$, 2-PAM signal $s_B$, and $N=2$ scenario in the following.	
Unlike the derivation of \textbf{Lemma 1}, here we should also consider the BER performance of $U_A$. We begin with the analysis of $s_A$, whose constellation is Fig. \ref{constell_symA}. In this manner, its minimum Euclidean distance can be derived as
\begin{align}\label{dAmin}
	d_{A,\min}=2d_A|\alpha_A(l)|,
\end{align}
where $l\in\mathcal{N}$. Obviously, $d_{A,\min}$ is determined by the minimum value of $|\alpha_A(l)|$. Therefore, if $|\alpha_A(l)|$ with different $l$ has a constant value, $d_{A,\min}$ will be optimal and easy to design. For example, if we consider the case where $|\alpha_A(1)|=0.2$, $|\alpha_A(2)|=0.1$, and the case where $|\alpha_A(1)|=|\alpha_A(2)|=0.2$. It is clear that the latter case has larger $d_{A,\min}$.   Then we shift our focus to $s_B$. The combined constellation is shown as Fig. \ref{constellation_symB}, whose minimum Euclidean distance can be written as
\begin{align}\label{dB_minimum}
	d_{B,\min}=\min\{d_{B,12},d_{B,34},d_{B,14},d_{B,23}\},
\end{align}
where
\begin{align}\label{d_1}
	d_{B,12}=2d_B|\alpha_B(1)|,
\end{align}
\begin{align}\label{d_2}
	d_{B,34}=2d_B|\alpha_B(2)|,
\end{align}
and
\begin{align}\label{dB14}
d_{B,14}=d_{B,23}=\sqrt{|d_B\alpha_B(1)|^2+|d_B\alpha_B(2)|^2}.
\end{align}
Collating (\ref{dB14}) with (\ref{d_1}) and (\ref{d_2}), it is clear that $d_{B,\min}$ is determined by the minimum value of $|\alpha_B(1)|$ and $|\alpha_B(2)|$. In a similar manner, larger $d_{B,\min}$ will be achieved by crafty design in the case of $|\alpha_B(1)|=|\alpha_B(2)|$. Consequently, larger $d_{i,\min}$, $i\in\{A,B\}$, will be achieved when $|\alpha_i(l)|=|\alpha_i(l')|$, where $l,l'\in\mathcal{N}$. In this scenario, the incremental bits of PS-NOMA come from $\bm{\Theta}$, which in turn reduces the design complexity of $\textbf{G}$.
Relevant content is also discussed in Section. IV. 
\end{proof}
%\begin{figure}[t]
%	\centering
%	\includegraphics[width=3.5in]{dmin.eps}\\
%	\caption{Constellations of $s_B$ with $N=2$, where $s_B$ is 2-PAM symbols. $\mathcal{S}_B(1),\mathcal{S}_B(3) : s_B=-1$, $\mathcal{S}_B(2), \mathcal{S}_B(4): s_B=1$. }
%	\label{dmin}
%\end{figure} 
\section{Numerical Results}
In this section, we illustrate the BER and achievable rate performance of the proposed scheme through Monte Carlo simulation and numerical results. Since the power coefficients of PS-NOMA are randomly varying, from the fairness perspective, we set the counterpart without PS (denoted by NOMA) as a regular NOMA scheme. The average powers of $h_A$, $h_B$ 
are set to be $\beta_A = 10$, $\beta_B=1$, respectively.
For brevity, we will refer to “PS-NOMA  ($M_A$-PAM, $M_B$-PAM, $N$)” as the PS-NOMA 
scheme where the BS has $N$ PLs and maps
the information bits for $U_A$ and $U_B$ into $M_A$-PAM  and
$M_B$-PAM, respectively. Similarly, “NOMA ($M_A$-PAM, $M_B$-PAM)” denotes the NOMA scheme with $M_A$-PAM and $M_B$-PAM signals intended for $U_A$ and $U_B$, respectively.  Besides, we take SNR=$1/N_0$ as the horizontal axis of the following figures. 
\begin{figure}[t]
	\centering
	\includegraphics[width=5in]{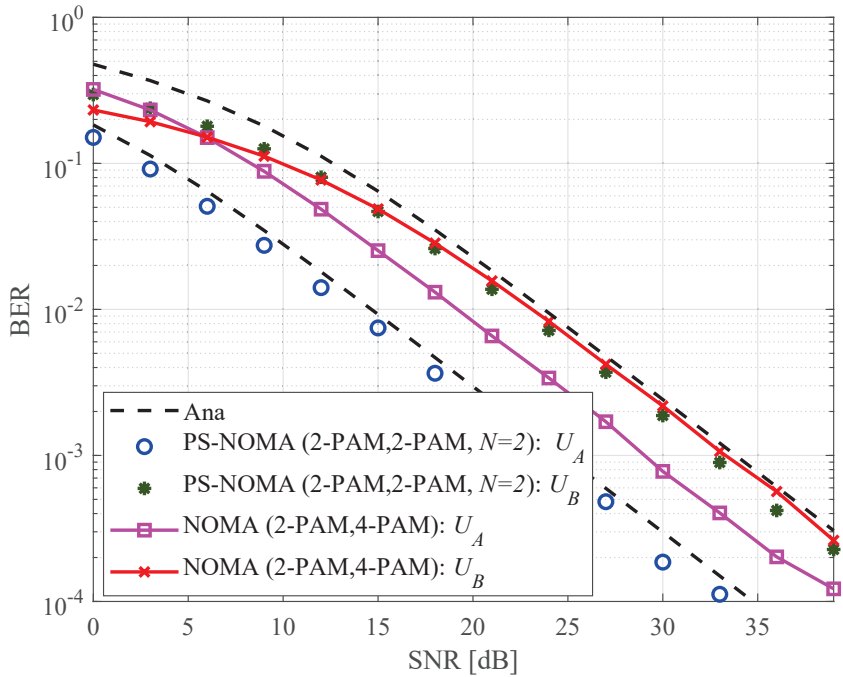}\\
	\caption{BER comparison between PS-NOMA and NOMA.}
	\label{PS_NOMA_VS_RANDOM_NOMA_BER}
\end{figure} 
\begin{figure}
	\centering
	\includegraphics[width=5in]{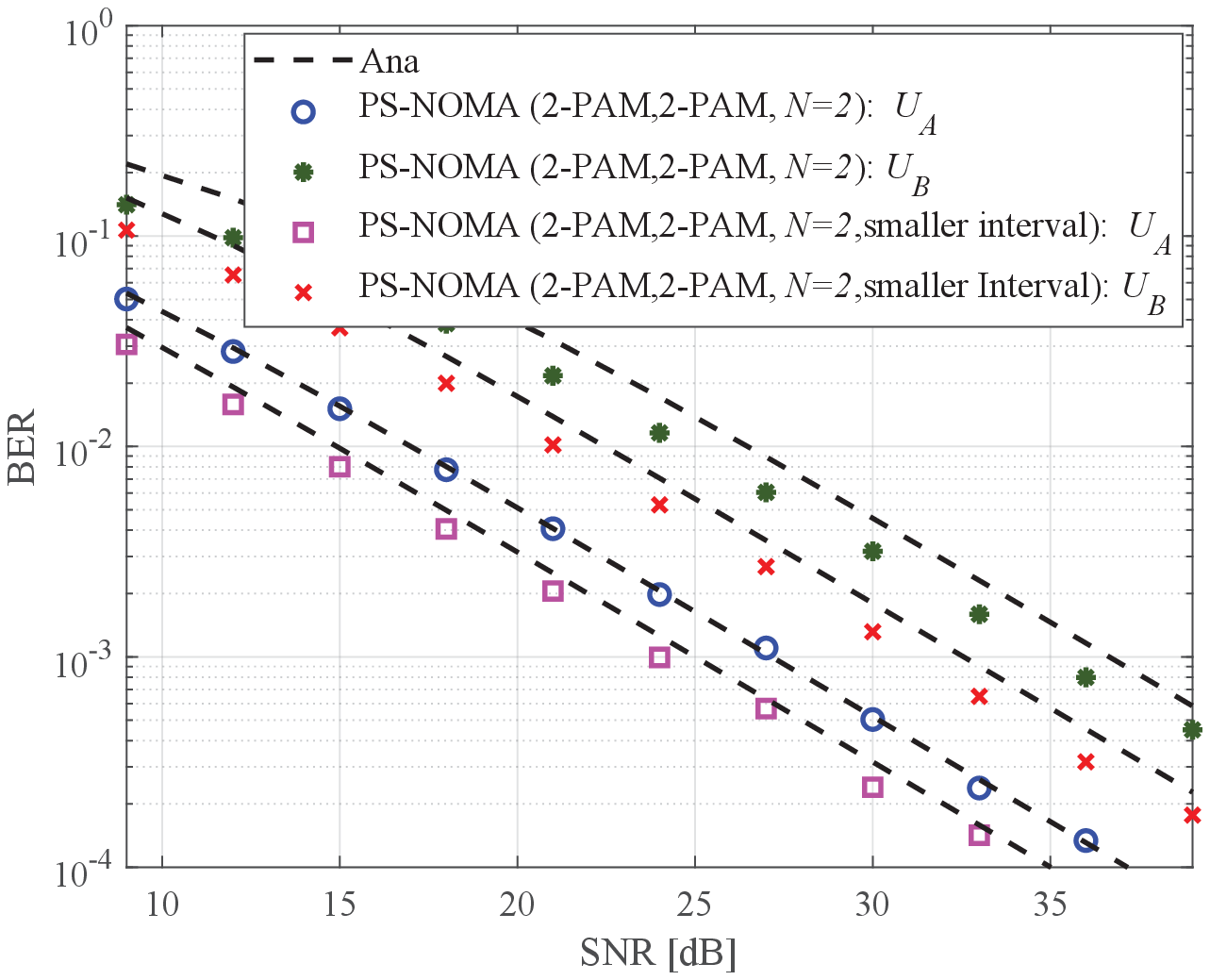}\\
	\caption{BER comparison between PS-NOMA schemes with different PL intervals: Case 1.}
\label{PS_NOMA_VS_LARGER_INTERVAL_uA_same}
\end{figure}
\begin{figure}
	\centering
	\includegraphics[width=5in]{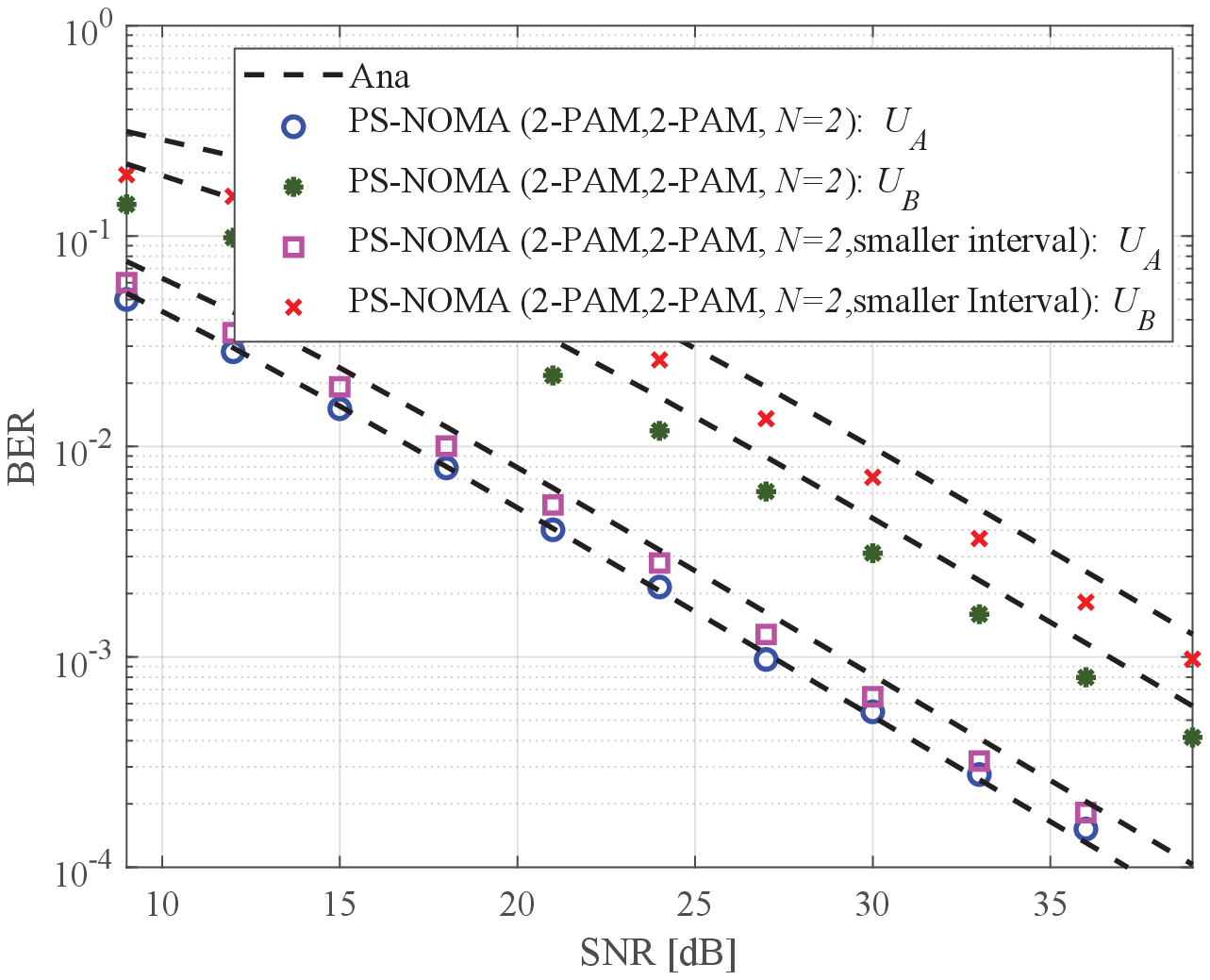}\\
	\caption{BER comparison between PS-NOMA schemes with different PL intervals: Case 2.}
	\label{PS_NOMA_VS_Smaller_INTERVAL_0.05_0.45}
\end{figure}
\begin{figure}
	\centering
	\includegraphics[width=5in]{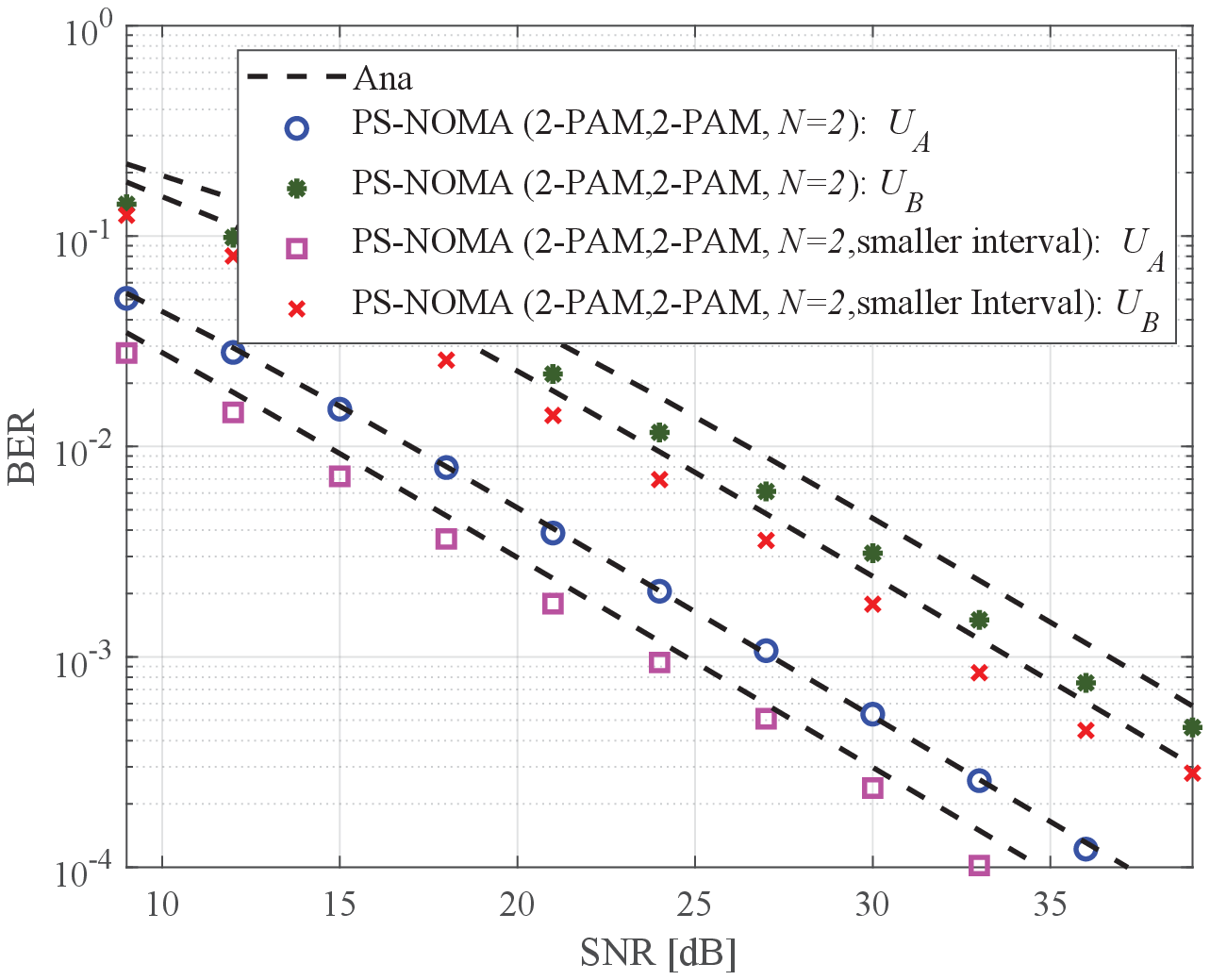}\\
	\caption{BER comparison between PS-NOMA schemes with different PL intervals: Case 3.}
	\label{PS_NOMA_VS_Smaller_INTERVAL_0.2_0.3}
\end{figure}
\begin{figure}
	\centering
	\includegraphics[width=5in]{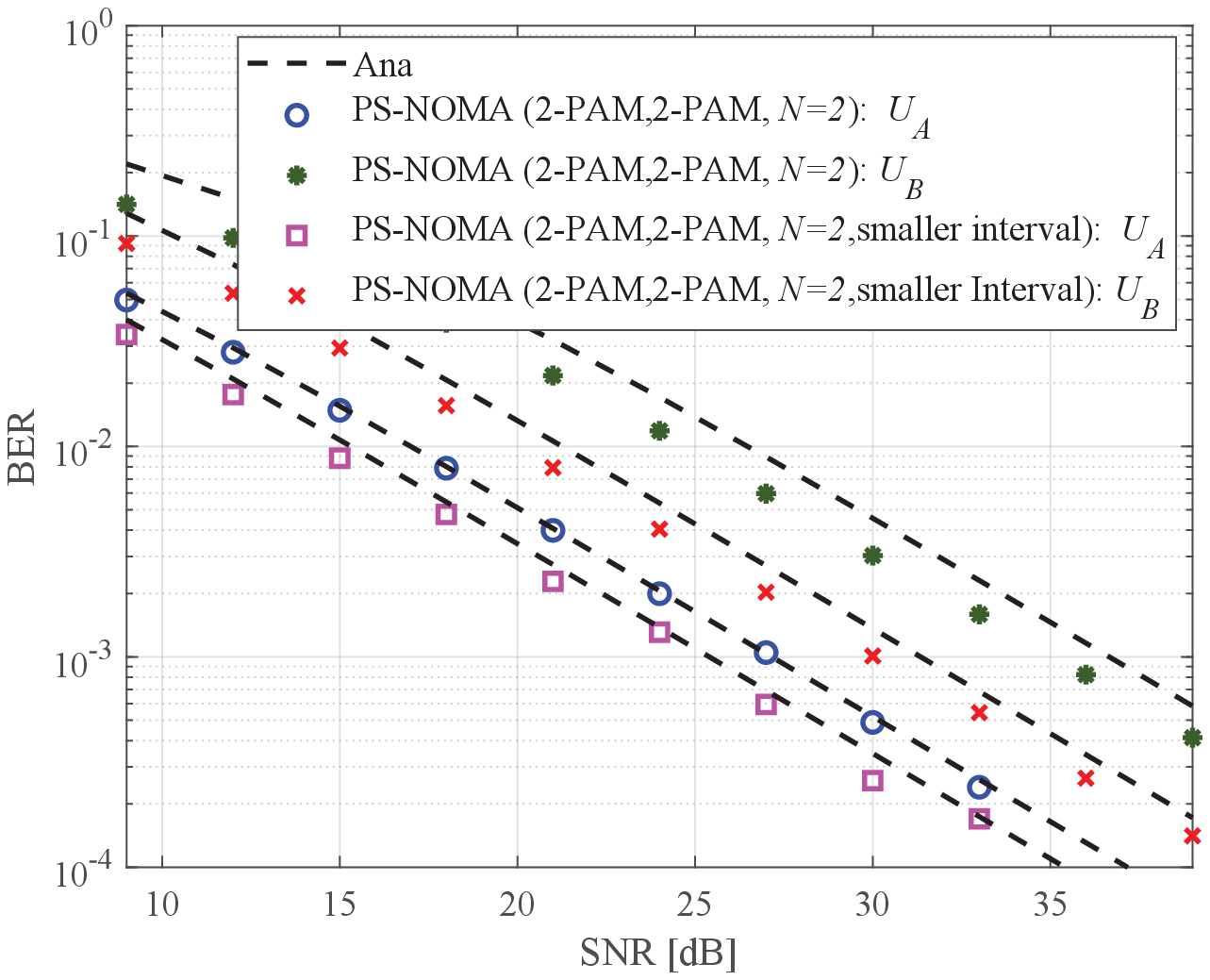}\\
	\caption{BER comparison between PS-NOMA schemes with different PL intervals: Case 4.}
	\label{PS_NOMA_VS_Smaller_INTERVAL_0.2_0.4}
\end{figure}
%\begin{figure}[t]
%	\centering
%	\includegraphics[width=3.5in]{PS_NOMA_VS_larger_INTERVAL_0.1_0.2.eps}\\
%	\caption{BER comparison between PS-NOMA schemes with different PL intervals: Case 1.}
%	\label{PS_NOMA_VS_LARGER_INTERVAL_uA_same}
%\end{figure} 
%\begin{figure}[t]
%	\centering
%	\includegraphics[width=3.5in]{PS_NOMA_VS_larger_INTERVAL_0.3_0.4.eps}\\
%	\caption{BER comparison between PS-NOMA schemes with different PL intervals: Case 2.}	
%	\label{PS_NOMA_VS_Smaller_INTERVAL_0.05_0.45}
%\end{figure} 
%\begin{figure}[t]
%	\centering
%	\includegraphics[width=3.5in]{PS_NOMA_VS_smaller_INTERVAL_0.2_0.2_equal_ylabel.eps}\\
%	\caption{BER comparison between PS-NOMA schemes with different PL intervals: Case 3.}	
%	\label{PS_NOMA_VS_Smaller_INTERVAL_0.2_0.3}
%\end{figure}
% \begin{figure}[t]
% 	\centering
% 	\includegraphics[width=3.5in]{PS_NOMA_VS_larger_INTERVAL_0.2_03.eps}\\
% 	\caption{BER comparison between PS-NOMA schemes with different PL intervals: Case 4. 0.2 0.2 vs 0.2 0.3}	
% 	\label{PS_NOMA_VS_Smaller_INTERVAL_0.2_0.4}
% \end{figure} 
\subsection{BER Performance}
In this subsection, the BER performance of PS-NOMA and NOMA is compared, assuming that $U_A$ and $U_B$ in all considered schemes employ ML detection. Since the comparison of OMA and NOMA has been widely argued in previous works,   the BER curves of OMA counterparts
are not presented.

%Figure \ref{N_2_VS_4}  depicts the performance comparisons among “PS-NOMA (2-PAM, 2-PAM, $N=2$)” and  “PS-NOMA (2-PAM, 2-PAM, $N=4$)”, where
%\begin{align}
%		sa=\left(                 %左括号
%	\begin{array}{cc}   %该矩阵一共3列，每一列都居中放置
%		\sqrt{0.15}&\sqrt{0.85}\\  %第一行元素
%		\sqrt{0.25}\exp(j\pi/4)&\sqrt{0.75}\exp(j\pi/4)\\  %第二行元素
%		\sqrt{0.35}\exp(j\pi/2)&\sqrt{0.65}\exp(j\pi/2)\\
%		\sqrt{0.45}\exp(3j\pi/4)&\sqrt{0.55}\exp(3j\pi/4)\\
%	\end{array}
%	\right)
%\end{align} for case 1 and 
%\begin{align}       %开始数学环境
%sa=\left(                 %左括号
%	\begin{array}{cc}   %该矩阵一共3列，每一列都居中放置
%		\sqrt{0.05} & \sqrt{0.95}\\  %第一行元素
%		\sqrt{0.06}\exp(j\pi/4) & \sqrt{0.94}\exp(j\pi/4)\\  %第二行元素
%		\sqrt{0.08}\exp(j\pi/2)&\sqrt{0.92}\exp(j\pi/2)\\
%		\sqrt{0.1}\exp(3j\pi/4) & \sqrt{0.9}\exp(3j\pi/4)\\
%	\end{array}
%	\right)                 %右括号
%\end{align}
%for case 2. 
% PS-NOMA与random-NOMA对比
In Fig. \ref{PS_NOMA_VS_RANDOM_NOMA_BER}, we compare the BER performance of “PS-NOMA
(2-PAM, 2-PAM, $N=2$)” and “NOMA (2-PAM, 4-PAM)”  with their theoretical
BER results presented. Given fairness, we assume that the maximum spectral efficiency for both schemes are 3 bits per second per Hertz (bps/Hz), and both schemes have the same constellation order. According to $\textbf{Lemma 1}$ and $\textbf{Corollary 1}$, the power matrices of “PS-NOMA
(2-PAM, 2-PAM, $N=2$)” and “NOMA (2-PAM, 4-PAM)” are set by 
\begin{align}\label{sa28}
	\textbf{G}=\left[               %左括号
	\begin{array}{cc}   %该矩阵一共3列，每一列都居中放置
		\sqrt{0.2},&\sqrt{0.8}\\  %第一行元素
		\sqrt{0.2}\exp(j\pi/2),&\sqrt{0.8}\exp(j\pi/2)\\ 
	\end{array}
	\right]  
\end{align}
and
$\textbf{G}=[\sqrt{0.2},\sqrt{0.8}]$, respectively.
It can be seen from Fig.~\ref{PS_NOMA_VS_RANDOM_NOMA_BER}
that the theoretical curves approximately match with their simulation
counterparts. Besides, one can find that the BER performance of $U_A$ is superior to that
of $U_B$ in both PS-NOMA and NOMA, which comes from the fact $\beta_A>\beta_B$. Although all the curves have the same constellation order as 8, the performances of PS-NOMA are always better than those of NOMA. 
This phenomenon tells us that applying PS can improve performance without increasing constellation complexity when the  amplitudes of power allocation coefficients of the system are the same.  

Figure \ref{PS_NOMA_VS_LARGER_INTERVAL_uA_same}, \ref{PS_NOMA_VS_Smaller_INTERVAL_0.05_0.45}, \ref{PS_NOMA_VS_Smaller_INTERVAL_0.2_0.3}, and \ref{PS_NOMA_VS_Smaller_INTERVAL_0.2_0.4} respectively investigate  four different scenarios where the power matrices of the compared counterparts 
%\begin{align}
%	\textbf{G}=\left[                %左括号
%	\begin{array}{cc}   %该矩阵一共3列，每一列都居中放置
%		\sqrt{0.05},&\sqrt{0.95}\\  %第一行元素
%		\sqrt{0.45}\exp(j\pi/2),&\sqrt{0.55}\exp(j\pi/2)\\ 
%	\end{array}
%	\right]
%\end{align} for case 1,
\begin{align}
	\textbf{G}=\left[                %左括号
	\begin{array}{cc}   %该矩阵一共3列，每一列都居中放置
		\sqrt{0.1},&\sqrt{0.9}\\  %第一行元素
		\sqrt{0.2}\exp(j\pi/2),&\sqrt{0.8}\exp(j\pi/2)\\ 
	\end{array}
	\right]
\end{align} for case 1,  
\begin{align}
	\textbf{G}=\left[                 %左括号
	\begin{array}{cc}   %该矩阵一共3列，每一列都居中放置
		\sqrt{0.3},&\sqrt{0.7}\\  %第一行元素
		\sqrt{0.4}\exp(j\pi/2),&\sqrt{0.6}\exp(j\pi/2)\\ 
	\end{array}
	\right]
\end{align} for case 2, 
\begin{align}\label{G0.20.2}
\textbf{G}=\left[                 %左括号
\begin{array}{cc}   %该矩阵一共3列，每一列都居中放置
	\sqrt{0.2},&\sqrt{0.8}\\  %第一行元素
	\sqrt{0.2}\exp(j\pi/2),&\sqrt{0.8}\exp(j\pi/2)\\ 
\end{array}
\right]
\end{align} for case 3, and
\begin{align}\label{G0.20.3}
	\textbf{G}=\left[                 %左括号
	\begin{array}{cc}   %该矩阵一共3列，每一列都居中放置
		\sqrt{0.1},&\sqrt{0.9}\\  %第一行元素
		\sqrt{0.1}\exp(j\pi/2),&\sqrt{0.9}\exp(j\pi/2)\\ 
	\end{array}
	\right]
\end{align}
for case 4, respectively. Besides, the
power matrix of the benchmark, i.e., PS-NOMA (2-PAM, 2-PAM, $N=2$), equals 
\begin{align}\label{benchmark}
\textbf{G}=\left[                 %左括号
\begin{array}{cc}   %该矩阵一共3列，每一列都居中放置
	\sqrt{0.1},&\sqrt{0.9}\\  %第一行元素
	\sqrt{0.4}\exp(j\pi/2),&\sqrt{0.6}\exp(j\pi/2)\\ 
\end{array}
\right]. 
\end{align}  Here the PL interval denotes the absolute difference between the absolute values of two adjacent rows of $\textbf{G}$. Smaller interval denotes  smaller value of PL interval than (\ref{benchmark}). Hence all the cases correspond to smaller intervals. 
We first discuss the BER performance of $U_B$.  Concretely, in cases 1, 3, and 4, the minimum Euclidean distances of $s_B$ are larger than those of the benchmark according to (\ref{dB_minimum}). Hence they show better BER performances, while opposite observations are found in the other cases. Furthermore, the phenomenon that performance of case 4 is better than that of benchmark also confirms the correctness of \textbf{Corollary 1}.   More specifically, by comparing the curves of $U_A$ in these cases, we observe that the minor power of $U_A$ does not decrease the BER performance of $U_A$. From  (\ref{ml_UA_sB}) and (\ref{ML_UA_SA}), we know that false detection of $s_B$ and PL may result in the wrong $s_B$, or wrong PL, or wrong $s_B$ and PL. The first detection process (\ref{ml_UA_sB}) is related to both $s_B$ and PL, and it is directly related to the correctness of (\ref{ML_UA_SA}), so the power of $U_B$ has a more significant influence than $U_A$. Notably, we assume that $M_A=M_B$ in this figure. Therefore this conclusion stands. In a word, we can consistently achieve better BER performance through designing power matrices with the same power value when $N=2$ and the design of lower-order matrices are much simpler than higher-order ones. 
%用户A阶数对BER影响

%用户B阶数对BER影响
% capacity
\begin{figure}[t]
	\centering
	\includegraphics[width=5in]{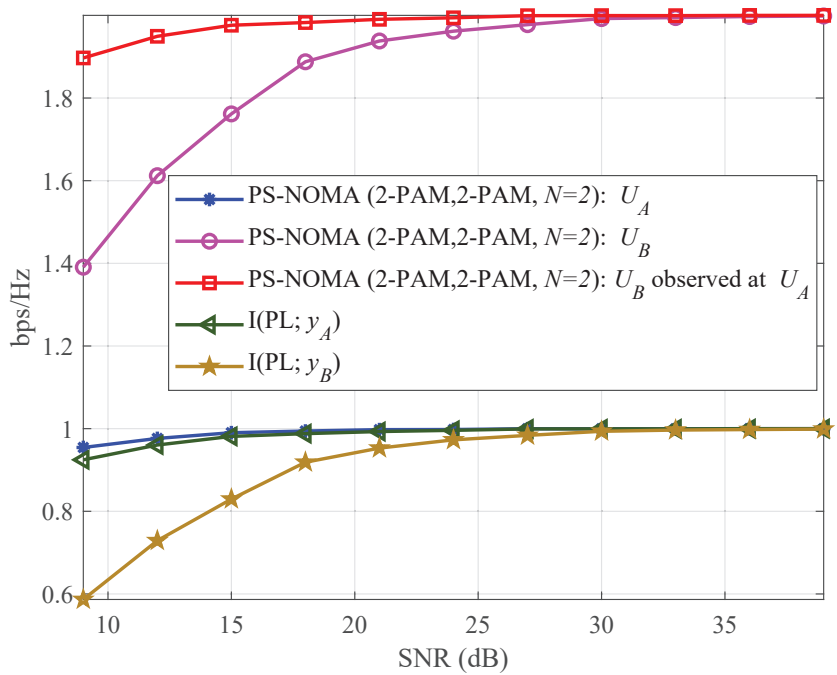}\\
	\caption{Achievable rate and MI performance of PS-NOMA.}
	\label{Capacity_MI_performance}
\end{figure}
\begin{figure}[t]
	\centering
	\includegraphics[width=5in]{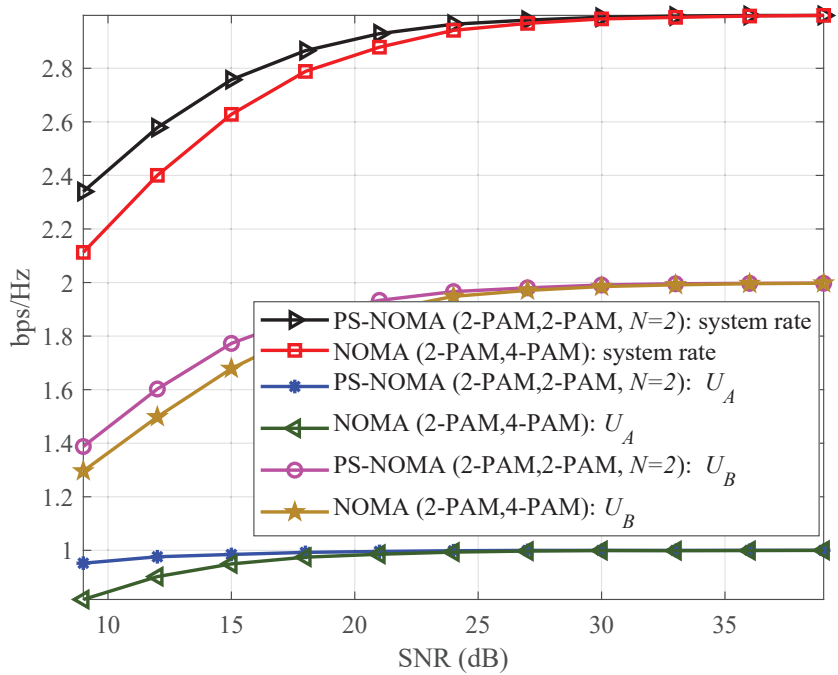}\\
	\caption{Achievable rate comparison between PS-NOMA and NOMA.}
	\label{Capacity_VERUS}
\end{figure}
\subsection{Achievable Rate Performance}
In this subsection, we first evaluate the achievable rate and MI performance of the
proposed PS-NOMA, and then compare
the achievable rate of PS-NOMA with
the NOMA counterparts.

In Fig. \ref{Capacity_MI_performance}, we show the achievable rate and MI performance of  PS-NOMA with (\ref{sa28}) and $\textbf{G}=(\sqrt{0.2},\sqrt{0.8})$, respectively. As seen from the figure, all the curves grow  steadily as the SNR
increases at low-to-medium SNR while achieving floors at high SNR. Obviously, the rates of $U_A$ obtained from (\ref{ISA_YA_PL}), $U_B$ obtained from (\ref{ISBYB}), and $U_B$ observed at $U_A$ obtained from (\ref{ISBYA}) saturate to $\log_2(M_A)$, $\log_2(NM_B)$, and $\log_2(NM_B)$, respectively. On the other hand, the MI curves ${\rm I}({\rm{PL}};y_A)$ and ${\rm I}({\rm{PL}};y_B)$ respectively denote the PL gains at $U_A$ and $U_B$ and are respectively generated from (\ref{IPL_YA}) and (\ref{IPL_YB}). Evidently, both curves saturate  at 1 bps/Hz since the input entropy
of the power-domain is $\log_2(N)$. In other words, all the gains gleaned from the power-domain are assigned to $U_B$. Moreover, since the channel quality
of $U_A$ is much better than that of $U_B$, the achievable rate or MI obtained at $U_A$ is always higher than that at $U_B$, which guarantees the success of SIC. 

Subsequently, in Fig. \ref{Capacity_VERUS}, we present the curves of PS-NOMA versus NOMA with the same system setups as Fig. \ref{Capacity_MI_performance}. Compared to NOMA,
PS-NOMA provides achievable rate gains for both $U_A$ and $U_B$, and obtains the sum achievable rate enhancement in turn. This system performance improvement is more pronounced at low SNR, while the system gap narrows with increasing SNR. Since PS-NOMA and NOMA have the same constellation order, at
high SNR, the achievable rate performances of them are equal. It is worth noting that the performance improvement of $U_A$ is more obvious than that of $U_B$. 
\section{Conclusions}
In this paper,  we have proposed a novel two-user NOMA scheme using PS, in which the ordinary PAM symbol carries the bits of users.  In addition, the BS randomly chooses a PL from the power matrix preset for each transmission to carry more information.  For PS-NOMA,
approximate BER expressions have been derived in
closed form for both users over Rayleigh
flat fading channels. The rates and MI under finite
input have also been developed. Computer simulations have
verified the performance analysis and shown that the proposed
scheme achieves better BER performance than  conventional NOMA without increasing constellation complexity. Moreover, we observe that the proposed
PS-NOMA can achieve better BER performance with the minimum number of PLs, which reduces the design difficulty of the power matrix. Furthermore, if we set an equal  amplitude for each PL, the system performance will also increase. On the other hand, the proposed scheme also shows superior achievable rate performance compared to that  of the NOMA counterpart. 
This work focuses only on the two-user scenario. We leave the more-user
extension for future study.
\begin{appendices}
	\setcounter{equation}{0}
	\renewcommand{\theequation}{\thesection.\arabic{equation}}
	\section{Proof of $(c)$}\label{Proof_of_c} 
To process step $(c)$ of (\ref{PEP_SB_HB}), we use the method introduced in [\ref{pepmethod}]. Let  
$\mathcal{D}\triangleq-|h_B\triangle x_B|^2-2\mathcal{R}\{w_B^*h_B\triangle x_B\}$. With $w_B=h_Bx_A+n_B$, $\mathcal{D}$ can be rewritten as
\begin{align}
\mathcal{D}=&-|h_B\triangle x_B|^2-|h_B|^22\mathcal{R}\{x_A^*\triangle x_B\}-2\mathcal{R}\{n_B^*h_B\triangle x_B\}.
\end{align}
Obviously, $\mathcal{D}$ is Gaussian distributed
with 
\begin{align}
	\mathbb{E}\{\mathcal{D}\}=-|h_B\triangle x_B|^2-|h_B|^22\mathcal{R}\{x_A^*\triangle x_B\}
\end{align}
and
\begin{align}
	\text{Var}\{\mathcal{D}\}=&\text{Var}\{n_B^*h_B\triangle x_B\}=2N_0|h_B\triangle x_B|^2.
\end{align}
Therefore, we have $\Pr(\mathcal{D}>0)=Q({-\mathbb{E}\{\mathcal{D}\}}/{\sqrt{\text{Var}\{\mathcal{D}\}}})$, which can be easily simplified as the result of $(c)$.
\end{appendices}
\newpage

\end{document}